\newcommand{\supp}{\operatorname{supp}}
\newcommand{\bpi}{\boldsymbol{\pi}}
\DeclareMathOperator*{\argmax}{arg\,max}
\definecolor{mygreen}{RGB}{28,172,0} 
\definecolor{mylilas}{RGB}{170,55,241}
\def\BState{\State\hskip-\ALG@thistlm}
\newcommand{\pr}{\mathbb{P}}
\newcommand{\E}{\mathbb{E}}
\newcommand{\tp}{\mathsf{T}}
\newcommand{\N}{\mathbb{N}}
\newcommand{\R}{\mathbb{R}}
\newcommand{\st}{\text{subject to}}
\newtheorem{theorem}{Theorem}
\newtheorem{lemma}{Lemma}
\newtheorem{corollary}{Corollary}
\newtheorem{proposition}{Proposition}
\newtheorem{definition}{Definition}
\title{Commitment with Signaling under Double-sided Information Asymmetry}
\author{Tao Li\thanks{Corresponding author}\quad  Quanyan Zhu\\Department of Electrical and Computer Engineering\\ New York University  \\ \texttt{\{taoli, qz494\}@nyu.edu} 
}
\begin{document}
\maketitle
\begin{abstract}
Information asymmetry in games enables players with the information advantage to manipulate others' beliefs by strategically revealing information to other players. This work considers a double-sided information asymmetry in a Bayesian Stackelberg game, where the leader's realized action, sampled from the mixed strategy commitment, is hidden from the follower. In contrast, the follower holds private information about his payoff. Given asymmetric information on both sides, an important question arises: \emph{Does the leader's information advantage outweigh the follower's?} We answer this question affirmatively in this work, where we demonstrate that by adequately designing a signaling device that reveals partial information regarding the leader's realized action to the follower, the leader can achieve a higher expected utility than that without signaling. Moreover, unlike previous works on the Bayesian Stackelberg game where mathematical programming tools are utilized, we interpret the leader's commitment as a probability measure over the belief space. Such a probabilistic language greatly simplifies the analysis and allows an indirect signaling scheme, leading to a geometric characterization of the equilibrium under the proposed game model.	
\end{abstract}

\section{Introduction}
\label{sec:intro}
Information asymmetry generally accounts for where one party acquires more information than the rest. By deliberately revealing information unknown to others, the party with the information advantage can manipulate the belief of receivers to achieve a higher utility. Information asymmetry has long existed since the market came into being. It has been of great significance in game-theoretic studies, such as mechanism/information design and contract theory \citep{fudenberg,bolton2005contract, yang2023designing}, game-theoretic learning \citep{tao_confluence,tao_info}, network security and control \citep{pan-tao22noneq, liu-li23,ge2023scenario} and adversarial machine learning \citep{pan2023first}. In this paper, we study a doubled-sided information asymmetry between the leader and the follower in the context of Stackelberg games, as the game model has gained much popularity in the realm of security research and has been widely applied in real-world applications, including wildlife protection \citep{yang14wild} and security systems \citep{tambe11ssg}. 

In a basic Bayesian Stackelberg game (BSG) model \Cref{sec:model}, the information asymmetry is one-sided, where the leader has incomplete knowledge about the follower's payoff. Hence, based on the prior distribution of the follower's type, the leader would commit to the optimal mixed strategy that brings her the highest expected utility\footnote{Following the convention, we refer to the leader as a female and to the follower as a male.}. As is often assumed in BSG \citep{tambe11ssg}, the leader's realized action (pure strategy) cannot be observed by the follower, implying the follower can only best respond according to the mixed strategy announced by the leader. However, this hidden action also brings the leader asymmetric information over the follower, and a natural question arises: Can the leader leverage this information to get better off, compared with her Bayesian Stackelberg equilibrium payoff? 

In this paper, we answer this question affirmatively by showing that with the first-mover advantage, the leader can utilize her asymmetric information when dealing with the unknown follower. Specifically, we proposed a new class of Stackelberg games based on the baseline BSG. We refer to the proposed model as a signaling Bayesian Stackelberg game (\texttt{sigBSG}). In this model, the leader can include a signaling device as part of her commitment, signaling differently to followers of different types. It is straightforward that the leader can inevitably benefit from her extra commitment ability, i.e., designing a signaling device. However, it should be noted that the follower's asymmetric information makes the gameplay more involved, as he can also leverage his information advantage to influence the leader. To better illustrate the double-sided information asymmetry and strategic interactions under this circumstance, we use a running example in \Cref{sec:example}.

Our contributions are two-fold. 1) We propose a new class of Stackelberg game (\texttt{sigBSG}) models under doubled-sided information asymmetry. By showing that \texttt{sigBSG} acquires the same information structure, we show that Bayesian Stackelberg equilibrium (BSE) is still a proper solution concept in \texttt{sigBSG}, though the equilibrium is structurally much more complicated; 2) To characterize the sigBSE, thanks to the randomness created by the signaling device, we can interpret the leader's commitment as a probability measure over the belief space, which leads to a geometric characterization of sigBSE and a finite-time algorithm for equilibrium seeking. As a direct corollary of the geometric characterization, one interesting finding is that even though the leader does offset the follower's information advantage, achieving a higher payoff, the follower may render the leader's exact equilibrium strategy unattainable under some circumstances, implying that the follower's asymmetric information also influences the leader in the gameplay.  

\section{A Running Example}
\label{sec:example}
The Stackelberg game model was initially proposed by von Stackelberg \citep{stackelberg} when studying market competition between a leader, such as a leading firm in the market, and a follower, which for example, is an emerging start-up. Due to its dominance, the leader enjoys the first-mover advantage, moving first or announcing a commitment before the follower makes decisions. In order to capture the leader's incomplete knowledge about the follower's payoff, the  Bayesian model is considered, where the type of the follower is drawn from a finite set according to a prior distribution. Then, the follower's information asymmetry is that his type is private, which is hidden from the leader. In the following, we provide a concrete example adapted from \citep{hifeng16signaling,gan19imitative} to illustrate how the leader, on the other hand, can leverage her asymmetric information about her covert action. 

Consider a market competition between a leading firm (the leader) and a start-up (the follower). The leader specializes in two products: product 1 and product 2. Due to its limited capacity, the follower can only focus on one product, and it is known that the follower will choose product 1 with a probability of $0.55$ and product 2 with a probability of $0.45$. To fit the Bayesian model, we refer to the follower choosing product 1 as the follower of type $\theta_1$ and the other one as the follower of type $\theta_2$. As mentioned above, the type is only revealed to the follower himself. In this competition, the leader has to decide which product she devotes to drive the follower out of the market. The follower has two options: enter the market and sell the product he focuses on or leave the market. \Cref{tab:payoff} summarizes the payoff structures, where the leader's action denotes the product she devotes to; $i_1$ for product 1, $i_2$ for product 2, and $i_0$ means she maintains the current production. Similarly, the follower's action $j\in \{j_0,j_1\}$ denotes whether he enters the market $j_1$ or leave $j_0$.    
\begin{table}
\caption{Payoff Structures of the Bayesian Stackelberg game. The leader only gets 1 when the follower leaves the market, whereas the follower only gets a positive utility when he successfully survives: entering the market and avoiding competing with the leader. The following two tables give the payoff matrices of different followers: the row player is the leader, and the column player is the follower.  }
\begin{subtable}{.5\linewidth}
      \centering
        \caption{$\theta_1$ follower }
      \centering
        \begin{tabular}{|c|c|c|}
\hline
\diagbox{L}{F}& $j_0$ & $j_1$  \\ \hline
$i_0$ & (1,0) & (0,2)  \\ \hline
$i_1$ & (1,0) & (0,-1) \\ \hline
$i_2$ & (1,0) & (0,2)  \\ \hline
\end{tabular}
    \end{subtable}%
    \begin{subtable}{.5\linewidth}
      \centering
         \caption{$\theta_2$ follower}
    \begin{tabular}{|c|c|c|}
\hline
\diagbox{L}{F} & $j_0$ & $j_1$  \\ \hline
$i_0$ & (1,0) & (0,1)  \\ \hline
$i_1$ & (1,0) & (0,1) \\ \hline
$i_2$ & (1,0) & (0,-1)  \\ \hline
\end{tabular}
    \end{subtable}
    \label{tab:payoff}
\end{table}

As shown in \Cref{tab:payoff}, when the follower leaves the market, the leader gets 1; otherwise, 0, and hence the leader's objective is to drive the follower out of the market. On the other hand, the follower's goal is to survive in the market by avoiding competition with the leader. From the payoff matrices, to defeat the follower of type $\theta_1$, the leader has to commit to a mixed strategy that allows her to play $i_1$ at least with probability $2/3$. Similarly, to force the type $\theta_2$ to leave the market, the leader has to play $i_2$ with a probability of at least $1/2$. There does not exist a mixed strategy that drives both types out of the market, and the optimal leader's strategy in this Bayesian Stackelberg game is to play $i_1$ at least with probability $2/3$, which makes the type $\theta_1$ leaves the market, as the follower of this type occurs with a higher probability. The leader expected utility is $0.55$ under this Bayesian Stackelberg equilibrium (BSE). Without loss of generality, we assume that the follower breaks the ties in favor of the leader. The assumption is common in Stackelberg games \citep{fudenberg}. Generally, the leader can induce the follower to best respond to the leader's benefit by infinitesimal strategy variations. The corresponding solution concept is referred to as the strong Stackelberg equilibrium \citep{tambe11ssg}.

We now consider the signaling commitment, where the leader signals differently to different followers. Even though the leader is unaware of the follower's private type,  the leader can simply ask the follower to report his type. We first demonstrate that if the follower truthfully reports, the leader's expected utility increases. When using the signaling commitment, the leader announces her mixed strategy $(0,1/2,1/2)$ and also a signaling scheme specified in the following. There are two different signals $\{s_0,s_1\}$ at the leader's disposal. When facing a type $\theta_1$ follower, the leader sends $s_0$ to the follower, when the realized action is $i_0$ and will send $s_1$ uniformly at random when $i_1$ is implemented. On the other hand, for the follower of type $\theta_2$, the leader will always send $s_0$ whatever the realized action is. Formally, this signaling scheme can be described by the following conditional probabilities:
\begin{align*}
\begin{array}{ll}
    \pr(s_0|i_1,\theta_1)=1 &\pr(s_1|i_1,\theta_1)=0;\\
    \pr(s_0|i_2,\theta_1)=\frac{1}{2}& \pr(s_1|i_2,\theta_1)=\frac{1}{2};\\
    \pr(s_0|i_1,\theta_2)=1 &\pr(s_1|i_1,\theta_2)=0;\\
    \pr(s_0|i_2,\theta_2)=1 &\pr(s_1|i_2,\theta_2)=0.
\end{array}
\end{align*}

First, we notice that this signaling is totally uninformative to the type $\theta_2$, as he only receives the signal $s_0$, and the posterior belief produced by the Bayes rule is exactly the same as the announced mixed strategy. As a result, his best response is to leave the market. On the other hand, the follower of type $\theta_1$ receives $s_0$ with probability 3/4, based on which, he infers that the leader's realized action is $i_1$ with probability $\pr(i_1|s_0,\theta_1)=2/3$ or $i_2$ with probability $\pr(i_2|s_0,\theta_1)=1/3$. Holding this belief, the follower would choose to leave the market, assuming he breaks ties in favor of the leader. A similar argument also applies to the case where the follower of $\theta_1$ receives $s_1$, which makes the follower believe  that leader is taking $i_2$ with probability $\pr(i_2|s_1,\theta_1)=1$. Hence, he will enter the market. As a result, with this signaling commitment, the leader is able to drive the follower of $\theta_2$ out of the market for sure and type $\theta_1$ follower out of the market 3/4 of the time, yielding an expected utility $0.8625>0.55$ higher than that achieved by a BSE strategy.  
 
 However, the success of the signaling commitment relies on the leader distinguishing different follower types and signaling differently. With incomplete knowledge, the leader needs the follower to report his type, and this is where the follower can leverage his information asymmetry. As shown in the discussion above, once the leader announces the signaling commitment, the follower of $\theta_2$ would immediately realize that the signaling does not reveal any information had he truthfully reported the type. Instead, it is in his interest to misreport this type as $\theta_1$, which allows him to infer the leader's realized action: when he receives $s_0$, he will enter the market, as the leader is more likely to be focusing on product 1, whereas he leaves the market when receiving $s_1$, as he knows that the leader devotes to product 2 for sure. By misreporting, the type-$\theta_2$ follower may find a chance to survive, increasing his expected utility. Finally, we note that the follower of $\theta_1$ is not incentivized to deceive the leader, since his original signals are more informative than what he would obtain by misreporting. 

\subsection{Related Works}
  This motivating example demonstrates that how additional commitment power can benefit the leader by signaling to followers, resulting in a higher expected payoff than a BSE. With the picture of the potential benefit of signaling commitment as well as much involved strategic interactions due to double-sided information asymmetry, we briefly review some related existing works, showing the connections and the novelty of our work.  
\paragraph{Signaling Commitment} Our work is inspired by the correlation commitment studied in    \citep{korzhyk11correlated,hifeng16signaling}. The signaling device in our model can also be viewed as a correlation device, which coordinates players' actions in the desired way (desired by the leader). Since the correlation device is designed by the leader, it never hurts the leader (by sending uninformative signals), instead, it can increase the leader's payoff, as shown in the example. However, similar to other early works on Stackelberg games \citep{conitzer06compute_SG,letchford09br_learn,marecki12MCTS_SG}, the research works mentioned above focus on the algorithmic aspect, where the computation of the equilibrium is of main interest. Therefore, in the works mentioned above, the follower is assumed to be truthfully behaving all the time. In contrast, in our model, the follower is a rational player, whose strategic behaviors are taken into account. 

\paragraph{Follower Imitative Deception} Our work also subscribes to the recent line of works that explore the information asymmetry in Stackelberg games due to the leader's incomplete knowledge, and the resulting follower's behaviors. As discussed in \citep{gan19imitative,gan19counter,gan20opt_deceive}, a strategic follower is incentivized to leverage the information asymmetry by behaving as if he is of another type, which the authors refer to as the imitative deception. As shown in \citep{gan19counter}, the follower would behave in such a way that makes the leader believe the game is zero-sum, misleading the leader to play her maxmin strategy, which means that the leader does not obtain any useful information regarding the follower's payoff through the interaction. Our work differs from these works in that the double-sided information asymmetry is considered in our model, where the leader also enjoys an information advantage. Furthermore, we show that even under this follower's imitative deception, with her asymmetric information and the first-mover advantage, the leader still benefits from the signaling commitment.    
\paragraph{Information Design} Finally, our work is also related to a much broader context in game theory: information design \citep{emir11persuasion,kolo17private,matteo20online_pers}, as the leader considered in this model not only commits to a mixed strategy but also designs a signaling device for revealing information. As indirect information design is considered in this work, our problem becomes more challenging in that the leader must take into account the possible misreports, which makes the leader's optimization problem (see \eqref{eq:direct_opt} ) complicated.  
\section{Signaling in Bayesian Stackelberg Games}\label{sec:sigBSG}
In this section, we first give a generic model of the Bayesian Stackelberg game studied in this paper, and then we discuss the proper solution concept, considering the information structure of the game. The mathematical description of the equilibrium of \texttt{sigBSG} is given in the \Cref{sec:probab} after we introduce the probabilistic language for describing the model. Throughout the paper, we use the following notations. Given an integer $n\in\N$, let $[n]:=\{1,2,\ldots,n\}$ be the set of all natural numbers up to $n$.  Given a set $\mathcal{X}$ in a Euclidean space, let $\Delta(\mathcal{X})$ denote the set of Borel probability measures over $\mathcal{X}$. Specifically, we denote $\Delta([n]):=\{x\in \R^n| \mathbf{1}^\tp x=1\}$ by $\Delta_n$, which is a probability simplex in $\R^n$. Finally, we denote by $\mathcal{X}^n$ the product space $\prod_{k=1}^n \mathcal{X}$. 

\subsection{The Model}\label{sec:model}
We start with a basic Bayesian Stackelberg game with one leader type and multiple follower types. Let $\Theta:=\{\theta_1,\theta_2,\ldots, \theta_K\}$ be the set of all the follower types, and its cardinality is $|\Theta|=K\in \N$. The type $\theta$ follower occurs with a probability $\mu^*(\theta)$, where $\mu^*$ is a probability measure over $\Theta$. We use $[M], M\in \N$ and $[N], N\in \N$ to denote the leader's and follower's pure strategy set, respectively. The payoff of the leader is determined by the utility function $L:[M]\times[N]\rightarrow \R$, and similarly, the utility function of type $\theta$ follower is $F^\theta:[M]\times[N]\rightarrow \R$. We assume that both players' utility functions are bounded. In the basic model of Bayesian Stackelberg game, the leader would first commit to a mixed strategy $\mathbf{x}\in \Delta_M$. It should be noted that the mixed strategy is not mere randomization over pure strategies. Instead, it leads to a stronger commitment advantage \citep{conitzer16mixed_sg} and has attracted significant attention due to its direct applications in real security domains \citep{tambe11ssg}.  After observing the leader's mixed strategy, the follower would come up with the best response, maximizing his expected utility under the mixed strategy. Without loss of generality, we assume the follower's best response is a pure strategy with ties broken in favor of the leader.  Finally, the leader plays a pure strategy sampled from the mixed strategy, while the follower plays his best response, and their payoffs are determined by their utility functions. We note that the leader's realized action is effectively hidden from the follower since the follower's best response only depends on the mixed strategy.

Apart from the basic setup, we consider an even stronger commitment: signaling commitment, which consists of a mixed strategy as well as a signaling scheme, which sends a signal to the follower based on the leader's realized action as well as the type reported by the follower. Since the leader acquires additional commitment ability, the gameplay should be revised accordingly. Compared with the three stages in the basic Stackelberg game, in the proposed model, the gameplay comprises four stages: 1) the leader announces her signaling commitment, including her mixed strategy and the signaling scheme; 2) the follower reports his type to the leader, in order to receive the signal; 3) based on her realized action, as well as the reported type, the leader, sends a signal the follower; 4) after receiving the signal, the follower updates his belief about the leader's realized action, based on which, he best responds. The payoffs of both sides are determined by the players' pure strategies according to their utilities.  To better present our analysis, in the following, we use $\hat{\theta}$ to denote the reported type, and $(\hat{\theta};\theta)$ indicates that the follower of true type $\theta$ reports $\hat{\theta}$ to the leader.

We now describe the signaling scheme in mathematical terms. Let $\mathcal{S}$ be a finite set of signals, from which the leader draws a signal $s\in \mathcal{S}$ according to a conditional distribution specified by the scheme. Mathematically, a signaling scheme is a mapping $\phi:[M]\times\Theta\rightarrow \Delta(\mathcal{S})$, which determines the probability of sending signal $s$ conditional on the realized action $i\in [M]$ and the reported type $\hat{\theta}\in \Theta$, that is $\pr(s|i,\hat{\theta})=\phi(s|i,\hat{\theta})$. To simplify our exposition, by revelation principle in Bayesian persuasion \citep[Proposition 1]{emir11persuasion}, we consider the direct signaling device, for which the signal space coincides with the follower's action space, that is $\mathcal{S}=[N]$. Under this signaling, each signal can be interpreted as an action recommendation. In the example in \Cref{sec:example}, for the type $\theta_1$ follower, the signal $s_0$ can be replaced by $j_0$, suggesting that the follower would better leave the market since the leader is devoting resources to the corresponding product. Similarly, $s_1$ is equivalent to $j_1$, which means the leader encourages the follower to enter the market. However, we note that this is merely an interpretation, and each signal, except inducing a posterior belief, does not carry a meaning beyond itself. 

Given a signaling commitment $\sigma:=(\mathbf{x},\phi)$, when the type $\hat{\theta}$ is reported, the probability of the leader playing  $i$ and recommending the follower to play $j$ is $\pr(i,j|\theta)=\mathbf{x}_i\phi(j|i,\theta)$, meaning that under the reported type $\hat{\theta}$, the commitment $\sigma$ in fact is equivalent a correlation strategy \citep{korzhyk11correlated}, which we denote by $C^\theta_{ij}:=\pr(i,j|\theta)$. Hence, the signaling commitment can be redefined as $\sigma=(\mathbf{x},\mathbf{C})$, where $\mathbf{C}\in [0,1]^{M\times N\times K}$ and its $(i,j,\theta)$ component $C^\theta_{ij}$ specifies the correlation. We will use this new definition of the signaling commitment, which is essentially the same as the original definition, since from our argument above, the original signaling scheme $\phi$ can be easily recovered from $\mathbf{x}$ and $\mathbf{C}$. 

\subsection{The Solution Concept}
One challenge in analyzing the proposed game model is the double-side information asymmetry. The leader's realized action is unknown to the follower, and hence, by signaling, the leader can manipulate the follower's belief. On the other hand, the leader's incomplete knowledge also brings information advantage to the follower, who may take advantage of the signaling by misreporting, as shown in the running example. This double-sided information asymmetry raises a fundamental question: \emph{Can the leader offset the follower's information advantage by carefully designing the signaling commitment, which can bring her higher expected utility than that of BSE?} To answer this question, we first need to characterize the leader's optimal signaling commitment in the proposed \texttt{sigBSG}. 

Before proceeding to the detailed discussions, we first revisit the \texttt{sigBSG} model. To properly define the solution concept of this three-stage dynamic game (with the chance move included), we present the information structure of \texttt{sigBSG} in the following (see \Cref{fig:game_tree}). The key is that by viewing the leader's signaling commitment as her strategy and the follower's optimal reporting as the best response, \texttt{sigBSG} can be viewed as an extension of BSG, and hence, the notion of Bayesian Stackelberg equilibrium (BSE) also applies to \texttt{sigBSG}. To be specific, when the leader decides her signaling commitment, what she knows only includes the common knowledge of the game: the type space $\Theta$ and associated distribution $\mu$, as well as the utility functions of both sides. After the announcement, the follower needs to decide his reporting strategy based on the signaling and his true type. We note that this reporting strategy is purely based on the follower's expected utility with respect to the announced signaling, the reported type as well the true type, since by the time the follower reports, he knows nothing about the leader's move, except the signaling commitment. Therefore, \texttt{sigBSG} acquires the same information structure as BSG does, as shown in \Cref{fig:game_tree}. To better illustrate this similarity, we denote by $V(\sigma,(\hat{\theta};\theta))$ the leader's expected utility under the signaling commitment $\sigma$ and the reporting $(\hat{\theta};\theta)$. While for the follower of type $\theta$, we denote his expected utility by $U(\sigma,(\hat{\theta};\theta))$. All the involved quantities and mappings will be rigorously defined in \Cref{sec:probab}, and here we aim to provide the reader with a bird's-eye view of the leader's problem, which can be written as 
\begin{align}\label{eq:direct_opt}
\begin{aligned}
	\max_{\sigma=(\mathbf{x},\mathbf{C})} & \quad\sum_{\theta\in \Theta}\mu(\theta) V(\sigma,(\hat{\theta};\theta)),\\
	\st&\quad \hat{\theta}=\argmax_{\beta\in \Theta}U(\sigma, (\beta;\theta)).
\end{aligned}
\end{align}              
Therefore, from our argument above, it can be seen that in our proposed \texttt{sigBSG}, a proper solution concept is still the Bayesian Stackelberg equilibrium (BSE), as the follower can perfectly observe the leader's commitment. Compared with vanilla BSE, the follower determines his optimal reporting according to the expected utility $U(\sigma,(\hat{\theta};\theta))$, which interestingly raises the issue of the nonexistence of exact equilibrium, as we shall see it more clearly at the end of the following section. 
\begin{figure}
	\centering
	\includegraphics[width=0.6\textwidth]{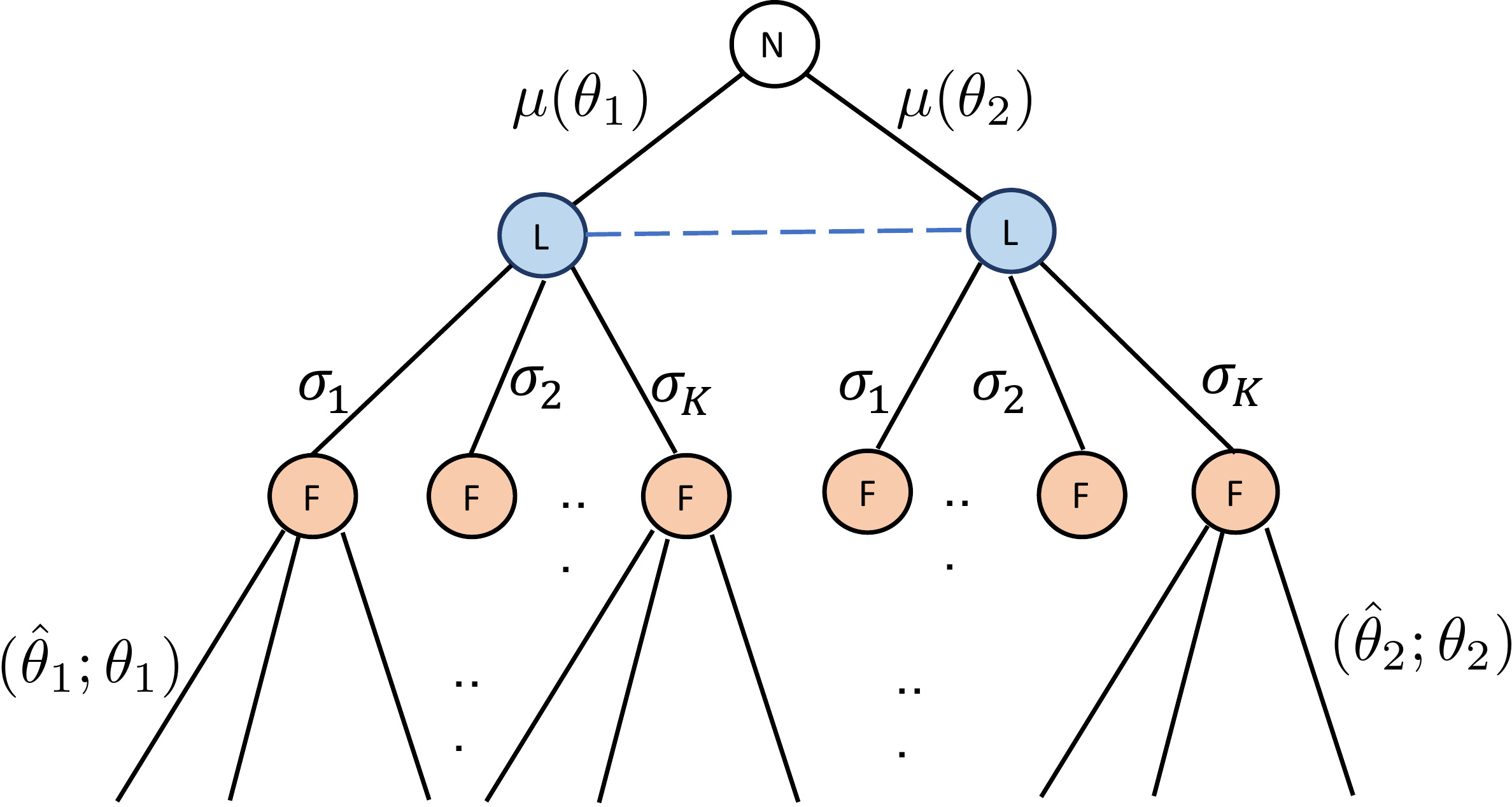}
	\caption{The game tree of \texttt{sigBSG}. Nature first reveals the follower's type according to distribution $\mu$. Then, unaware of the private type (denoted by the blue dashed line), the leader announces the signaling commitment $\sigma$. After observing the commitment, the follower decides what type to report. The rest of the play is suppressed because with the signaling $\sigma$ and reporting $(\hat{\theta};\theta)$, the expected utility is already determined.}
	\label{fig:game_tree}
\end{figure}

\section{Equilibrium Characterization}\label{sec:probab}
Given the leader's commitment $\sigma=(\mathbf{x},\mathbf{C})$,  the follower considers what he could obtain if he misreports. Denote $\nu^{\hat{\theta}}_j:=\sum_{i\in [M]}C^{\hat{\theta}}_{ij}$ the probability of the leader sending signal $j$, when the follower reports $\hat{\theta}$. When the leader recommends $j$, the follower can infer that the leader is playing $i$ with probability $C^{\hat{\theta}}_{ij}/\nu^{\hat{\theta}}_j$. Then, the follower's best response when receiving $j$ is
\begin{align}\label{eq:br_hard}
	j':=\argmax_{\tilde{j}}1/\nu^{\hat{\theta}}_j\sum_{i\in [M]}C^{\hat{\theta}}_{ij}F_{i,\tilde{j}}.
\end{align}
 Therefore, the follower's expected utility when he misreports is given by
$$U(\sigma,(\hat{\theta};\theta)):=\sum_{j\in [N]}\bigg[\max_{j'}\sum_{i\in [M]}C^{\hat{\theta}}_{ij}F_{ij'}\bigg],$$ which further leads to the follower's optimal reporting strategy 
\begin{align}\label{eq:report}
	\psi_{\theta}(\sigma):=\argmax_{\hat{\theta}\in \Theta}U(\sigma,(\hat{\theta};\theta)).
\end{align}
Finally, the leader's expected utility under the misreport is given by 
\begin{align*}
	 V(\sigma,(\psi_{\theta}(\sigma);\theta))=\sum_{i\in [M], j\in [N]}C^{\psi_{\theta}(\sigma)}_{ij} L_{i,j'},
\end{align*}
where $j'$ is given by \eqref{eq:br_hard}. We note that with the above discussion, it can be easily seen that directly solving \eqref{eq:direct_opt} is extremely difficult. Due to the interdependence of the leader's commitment and the follower's best response, the objective function is not linear nor convex and not even a continuous function with respect to $\mathbf{x}, \mathbf{C}$. Therefore, directly solving \eqref{eq:direct_opt} as in previous works \citep{conitzer06compute_SG,hifeng16signaling,korzhyk11correlated} is not viable in the proposed \texttt{sigBSG}, and we need another approach to tackling this problem. Inspired by the probabilistic viewpoint adopted in Bayesian persuasion \citep{emir11persuasion,matteo20online_pers}, in the following, we rewrite the signaling commitment as a probability measure, which greatly simplifies the problem. 

From the leader's perspective, given a signaling commitment $\sigma$, when the reported type is $\hat{\theta}$, and  the signal $j\in [N]$ is sent,  the following bayesian update should be performed by the follower to infer a posterior belief $b_j(i):={C}^{\hat{\theta}}_{ij}/\sum_{i\in [m]}{C}^{\hat{\theta}}_{ij}$ about the leader's realized action.
Then under this signaling commitment, the probability of incurring $b_j$ is given by 
\begin{align*}
	\pi^{\hat{\theta}}(b):=\sum_{j\in [N]: b_j=b}\sum_{i\in[M]}{C}^{\hat{\theta}}_{ij}.
\end{align*}  
In other words, $\pi^{\hat{\theta}}$ is the distribution over posterior beliefs when the follower reports $\hat{\theta}$ under the signaling. Moreover, from the definitions of $b$ and $\pi^\theta$, we have $\sum_{b\in\supp(\pi^\theta)}\pi^\theta(b)b(i)=\mathbf{x}_i$, which implies that the expected posterior beliefs equal the mixed strategy. Let $\Xi$ be the space of all beliefs, that is $\Xi:=\Delta_M$, and we denote by $\boldsymbol{\pi}:=[\pi^{\theta_1},\pi^{\theta_2},\ldots,\pi^{\theta_{K}}]$ the product probability measure over the product of $K$ belief spaces, that is $\boldsymbol{\pi}\in (\Delta(\Xi))^K$. $\Pi$ is said to be a set of consistent probability measure if for every $\theta\neq \theta'\in \Theta$, $$\sum_{b\in\supp(\pi^\theta)}\pi^\theta(b)b(i)=\sum_{b\in\supp(\pi^{\theta'})}\pi^{\theta'}(b)b(i).$$

This consistency implies that a valid signaling commitment should ensure that every follower's expected posterior belief is the same, as the leader can only commit to one mixed strategy. We denote by $\Pi\subset \left(\Delta(\Xi)\right)^K$  the set of probability measures that are consistent. Finally, we point out that the $\bpi$ and $\sigma$ are equivalent, and the two will be used interchangeably\footnote{The equivalence has been shown in \citep{emir11persuasion}, and the notion similar to the consistency defined in this paper is referred to as bayesian plausibility in the bayesian persuasion literature \citep{emir11persuasion}}. 

From the follower's perspective, after observing a signal that induces a posterior belief $b\in \Xi$, the follower best responds by choosing the action that maximizes his expected utility under this belief. Formally, given a belief $b$ and the type $\theta$, the best response set is 
\begin{align*}
	BR^\theta(b):=\argmax_{j\in [N]}\sum_{i\in [M]}b(i)F^\theta(i,j).
\end{align*} 
Without loss of generality, we assume that the best response set is always a singleton, as the follower can be induced to break the tie in favor of the leader. Under the belief $b$, with a slight abuse of notations, the expected utility of the follower is defined as
	$F^\theta(b):=\sum_{i\in [M]}b(i)F^\theta(i,BR^\theta(b))$.
Therefore, under the signaling commitment, the type $\theta$ follower's expected utility, if he reports $\hat{\theta}$, is defined as
\begin{align*}
	U(\bpi, (\hat{\theta};\theta)):=\sum_{b\in \supp(\pi^{\hat{\theta}})}\pi^{\hat{\theta}}(b)F^\theta(b).
\end{align*}
For any consistent $\bpi$, if there exists $\hat{\theta}$ such that $U(\bpi,(\hat{\theta};\theta))>U(\bpi,(\theta;\theta))$, for some $\theta\in \Theta$, then under the signaling commitment, the follower of type $\theta$ is incentivized to misreport his type.  Following this line of reasoning, we can rewrite the definition of the follower's optimal reporting strategy in \eqref{eq:report} as
\begin{align}\label{eq:new_report}
    \psi_{\theta}(\bpi):=\argmax_{\hat{\theta}\in \Theta}U^\theta(\bpi,(\hat{\theta};\theta)).
\end{align}
As for tie-breaking, we assume the follower can choose some arbitrary but consistent rule. For example, the reported type should be the $\theta_k$ with the smallest index. 

Similarly, for the leader's expected utility under the signaling commitment, we can introduce the following notations. Given a belief $b$, and the follower's true type $\theta$, we denote by $V^\theta(b):=\sum_{i\in [M]}b(i)L(i,BR^\theta(b))$ the leader's expected utility when she induces the belief $b$. Moreover, we denote by $V(\boldsymbol\pi, (\hat{\theta};\theta))$ the leader's expected utility achieved under the signaling scheme $\sigma$(equivalently $\boldsymbol{\pi}$) when the the follower of type $\theta$ reports $\hat{\theta}$, which is defined as 
\begin{align*}
	V(\boldsymbol{\pi},(\hat{\theta};\theta))=\sum_{b\in \supp(\pi^{\hat{\theta}})}\pi^{\hat{\theta}}(b)V^{\theta}(b),
\end{align*}
which means under misreporting, $\pi^{\hat{\theta}}(b)$ instead of $\pi^\theta$ determines the probability of the commitment incurring $b$, and then the follower will best respond $BR^\theta(b)$ according his true type. With the functions defined above, we are now ready the give the definition of the equilibrium in the proposed \texttt{sigBSG}. 
\begin{definition}
	For the \texttt{sigBSG}, a pair $(\bpi, \psi_\theta)$ is a signaling Bayesian Stackelberg equilibirum (sigBSE) if 
	\begin{align*}
		&\bpi\in \argmax\sum_{\theta\in \Theta}\mu(\theta)V(\bpi,(\psi_{\theta}(\bpi);\theta))
	\end{align*}
	where $\psi_\theta$ is defined in \eqref{eq:new_report}.  
\end{definition}

Seemingly, by adopting this probabilistic language, we have made the problem even more complicated, as now we have to search for the optimal signaling (probability measure) in the infinite-dimensional space $\Pi\subset(\Delta(\Xi))^K$. In fact, we demonstrate in the following that the leader's optimal signaling can always be found in a finite-dimensional subspace of $\Pi$. For each type $\theta\in \Theta$ and action $j\in [N]$, the set $\Xi^\theta_j:=\{b\in \Xi|j\in BR^\theta(b)\}$ includes all beliefs for which $j$ is the best response for the type-$\theta$ follower. Let $\boldsymbol{j}:=(j^k)_{k\in {K}}\in [N]^K$ be a tuple that specifies one action $j^k$ for each type $\theta_k\in \Theta$. Then for each tuple $\boldsymbol{j}$, let $\Xi_{\boldsymbol{j}}:=\cap_{k\in K}\Xi^{\theta_k}_{j^k}$ be the polytope such that $j^k$ is the best response for the corresponding type $\theta_k$. We note that $\Xi_{\boldsymbol{j}}$ is a convex polytope because each $\Xi^{\theta_k}_{j^k}$ is a polytope: best response can be described by a set of linear constraints. Any probability distribution over posterior beliefs in this intersection corresponds to a signaling scheme such that for every type $\theta_k$, the follower has no incentive to deviate from the corresponding action $j^k$. Let $\hat{\Xi}$ be the set of beliefs defined as $\hat{\Xi}:=\cup_{\boldsymbol{j}\in [N]^K}\operatorname{Vert}(\Xi_{\boldsymbol{j}})$, where $\operatorname{Vert}(\cdot)$ denotes the set of vertices (extreme points) of the given polytope. Finally, we define the following set of consistent distributions 
\begin{align*}
	&\widehat{\Pi}:=\{\boldsymbol{\pi}\in(\Delta(\hat{\Xi}))^K | \sum_{b\in\hat{\Xi}}\pi^\theta(b)b(i)=\sum_{b\in\hat{\Xi}}\pi^{\theta'}(b)b(i), \theta\neq \theta', i\in[M] \}.
\end{align*}
We note that $\widehat{\Pi}$ is a convex polytope since it can be seen as the intersection of $(\Delta(\hat{\Xi}))^K$ and a finite number of half-spaces, where the associated hyperplanes are specified by the linear constraints $\sum_{b\in\hat{\Xi}}\pi^\theta(i)b(i)=\sum_{b\in\hat{\Xi}}\pi^{\theta'}(i)b(i)$. We include more details in \Cref{apped:belief}.

One of our key findings is that even though the leader's valid signaling commitments live in an infinite-dimensional space, she can restrict her attention to $\widehat{\Pi}$ without deteriorating her expected utility. Formally, we state this result in the following lemma.
\begin{lemma}\label{lem:reduce1}
	Given the follower's optimal reporting strategy defined in \eqref{eq:new_report}, for any $\bpi\in \Pi$, there exists a $\hat{\bpi}\in \widehat{\Pi}$ such that 
	\begin{align}\label{eq:hat_pi_reduce}
		V(\bpi,(\psi_{\theta}(\bpi);\theta))=V(\hat{\bpi},(\psi_{\theta}(\hat{\bpi});\theta)),\quad \text{for all } \theta\in \Theta.
	\end{align}
\end{lemma}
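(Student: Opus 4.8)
The plan is to show that any consistent $\bpi \in \Pi$ can be replaced by a measure supported on the finite vertex set $\hat\Xi$ without changing the leader's equilibrium payoff, by performing the reduction type-by-type and then reconciling the resulting measures so that consistency is preserved. First I would fix a consistent $\bpi = [\pi^{\theta_1},\dots,\pi^{\theta_K}]$ and consider, for each reported type $\hat\theta$, the decomposition of the belief space $\Xi$ induced by the best-response regions $\{\Xi^{\theta}_j\}_{j\in[N]}$ for the \emph{true} type $\theta$ — but since in the definition of $U$ and $V$ under the report $\hat\theta$ the follower best-responds with his true type $\theta$, what actually matters is the partition of $\supp(\pi^{\hat\theta})$ according to which cell $\Xi_{\boldsymbol j}$ (i.e.\ which best-response tuple $\boldsymbol j$) each belief lies in. On each such cell the map $b \mapsto F^\theta(b)$ and $b\mapsto V^\theta(b)$ are \emph{linear} in $b$ (they equal $\sum_i b(i) F^\theta(i,j^\theta)$ and $\sum_i b(i) L(i,j^\theta)$ for the fixed action $j^\theta$ dictated by that cell). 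This linearity is the crux: it means the conditional expectation of the payoff, given that the induced belief falls in a fixed cell, depends only on the conditional barycenter of $\pi^{\hat\theta}$ restricted to that cell.

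The second step is the standard Carathéodory/splitting argument. For each cell $\Xi_{\boldsymbol j}$, let $p^{\hat\theta}_{\boldsymbol j} := \pi^{\hat\theta}(\Xi_{\boldsymbol j})$ be the mass and, if $p^{\hat\theta}_{\boldsymbol j}>0$, let $\bar b^{\hat\theta}_{\boldsymbol j}$ be the barycenter of the conditional measure. Since $\bar b^{\hat\theta}_{\boldsymbol j}\in\Xi_{\boldsymbol j}$ and $\Xi_{\boldsymbol j}$ is a polytope of dimension at most $M-1$, Carathéodory's theorem lets me write $\bar b^{\hat\theta}_{\boldsymbol j}$ as a convex combination of at most $M$ vertices of $\Xi_{\boldsymbol j}$, all of which lie in $\hat\Xi$. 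Replacing the conditional measure on each cell by this finitely-supported combination (scaled by $p^{\hat\theta}_{\boldsymbol j}$ and summed over cells) yields a new measure $\hat\pi^{\hat\theta}\in\Delta(\hat\Xi)$. By linearity of $F^\theta$ and $V^\theta$ on each cell, and since the cell masses and conditional barycenters are unchanged, we get $U(\bpi,(\hat\theta;\theta)) = U(\hat\bpi,(\hat\theta;\theta))$ and $V(\bpi,(\hat\theta;\theta)) = V(\hat\bpi,(\hat\theta;\theta))$ for \emph{every} pair $(\hat\theta;\theta)$ — note the reduction of $\pi^{\hat\theta}$ is carried out once and simultaneously serves all true types $\theta$, because the cell partition does not depend on which true-type payoff we evaluate, only on the best-response tuple. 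Because all the $U$-values are preserved across all reports, the $\argmax$ defining $\psi_\theta$ is preserved: $\psi_\theta(\hat\bpi)=\psi_\theta(\bpi)$ (using the same fixed tie-breaking rule), and hence $V(\bpi,(\psi_\theta(\bpi);\theta)) = V(\hat\bpi,(\psi_\theta(\hat\bpi);\theta))$ for all $\theta$, which is \eqref{eq:hat_pi_reduce}.

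The remaining step is to check that $\hat\bpi \in \widehat\Pi$, i.e.\ that consistency survives the reduction. Since for each $\hat\theta$ the barycenter of $\hat\pi^{\hat\theta}$ over all of $\hat\Xi$ equals $\sum_{\boldsymbol j} p^{\hat\theta}_{\boldsymbol j}\,\bar b^{\hat\theta}_{\boldsymbol j}$, which is exactly the barycenter of the original $\pi^{\hat\theta}$ (the barycenter of a measure equals the mass-weighted average of conditional barycenters over any partition), and since $\bpi$ was consistent so all these barycenters equal the common mixed strategy $\mathbf x$, the reduced measures $\hat\pi^{\theta}$ all share the same barycenter $\mathbf x$. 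Therefore $\hat\bpi$ satisfies the defining linear equalities of $\widehat\Pi$, and it is supported on $\hat\Xi$, so $\hat\bpi\in\widehat\Pi$.

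I expect the main obstacle to be bookkeeping around the partition in the misreporting case: when the follower of type $\theta$ reports $\hat\theta$, one must be careful that it is the cells $\Xi_{\boldsymbol j}$ (indexed by full best-response tuples over \emph{all} types), rather than the single-type regions $\Xi^\theta_j$, that are the right objects to condition on — this is precisely what makes a single reduction of $\pi^{\hat\theta}$ work simultaneously for computing $U(\cdot,(\hat\theta;\theta))$ for every $\theta$ and for $V$. A secondary subtlety is handling cells where the best response is not a singleton, which the tie-breaking assumption and the requirement that the chosen rule be fixed/consistent are designed to rule out on the relevant faces; I would note that the vertex set $\hat\Xi$ is a superset of the vertices of every $\Xi_{\boldsymbol j}$ so it contains enough points for the Carathéodory step regardless. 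Finally, one should confirm measurability of the conditional barycenters, which is immediate since the cells are finitely many Borel polytopes partitioning $\Xi$.
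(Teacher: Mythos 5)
Your proposal is correct and takes essentially the same route as the paper's proof: both re-support each $\pi^{\hat\theta}$ on the vertex set $\hat{\Xi}$ cell-by-cell over the regions $\Xi_{\boldsymbol{j}}$, exploit the linearity of $V^\theta$ and $F^\theta$ on each cell to preserve all $U$- and $V$-values (and hence the reporting strategy $\psi_\theta$), and verify membership in $\widehat{\Pi}$ by noting that barycenters, and thus consistency, are unchanged. The only cosmetic difference is that you collapse each cell's conditional measure to its barycenter and then apply Carath\'eodory, whereas the paper decomposes each belief $b$ individually into a convex combination $\omega^b$ of vertices of its cell; the two constructions have the same effect.
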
  
\begin{proof}[Proof of Sketch]
	The proof of this lemma is based on the following observation: any posterior belief $b$ must belong to $\Xi_{\boldsymbol{j}}$ for some $\boldsymbol{j}\in[N]^K$, and hence, it can be represented as the convex combination of elements of $\hat{\Xi}$. We denote such convex combination by $\omega^b\in \Delta(\hat{\Xi})$, where for $b'\in \hat{\Xi}$, $\omega^b(b')$ is coefficient corresponding to $b'$. Given $\bpi$, we define a new signaling $\hat{\bpi}$ as $\hat{\pi}^{\theta}(b):=\sum_{\tilde{b}\in \supp(\pi^\theta)}\pi^\theta(\tilde{b})\omega^{\tilde{b}}(b)$. Since $\bpi$ is consistent, it can bee shown that $\hat{\bpi}$ is also consistent, hence $\hat{\bpi}\in\widehat{\Pi}$. 

Moreover, for the $\hat{\pi}$ defined above, we have 
\begin{align}
	&V(\boldsymbol{\pi},(\hat{\theta};\theta))=V(\hat{\bpi},(\hat{\theta};\theta)).\label{eq:V_eq}\\
	&U(\bpi,(\hat{\theta};\theta))=U(\hat{\bpi},(\hat{\theta};\theta)),\label{eq:U_eq}
\end{align}
which implies that $\hat{\bpi}$ induces the same optimal reporting strategy as $\bpi$ does (see \eqref{eq:U_eq}). Hence, \eqref{eq:hat_pi_reduce} can be obtained directly from \eqref{eq:V_eq}. 
\end{proof}  
\Cref{lem:reduce1} is of great importance in that we now can study the leader's expected utility with respect to the prior distribution over types in the finite-dimensional convex polytope $\widehat{\Pi}\in[0,1]^{K|\hat{\Xi}|}$, which is much simpler to analyze than in $\Pi$. Formally, we let $V^\mu(\bpi)$ be the expected utility defined as $V^\mu(\bpi):=\sum_{\theta\in \Theta}\mu(\theta)V(\bpi,(\psi_{\theta}(\bpi);\theta))$. In the following, we prove that $V^\mu(\bpi)$ is piece-wise linear on $\widehat{\Pi}$, which paves the way for the characterization of the equilibrium. 

First, we introduce a partition of $\widehat{\Pi}$, where each partitioned region is also a convex polytope. For $l,k\in [K]$, we defined
\begin{align}
    \mathcal{P}^{l}_k:=\{\bpi\in \widehat{\Pi}| \psi_{\theta_k}(\bpi)=\theta_l\},
\end{align}
which is a subset of $\widehat{\Pi}$, including all signaling commitments that incentivizes the type $\theta_k$ follower to report $\theta_l$. According to the definition of the optimal reporting in \eqref{eq:new_report} and related quantities, it can be seen that $\mathcal{P}^{l}_k$ is a convex polytope.  Moreover, for any surjective mapping $\gamma:\Theta\rightarrow\Theta$, let 
\begin{align}
    \mathcal{P}^\gamma:=\cap_{k\in [K]}\mathcal{P}^{\gamma(k)}_k,
\end{align}
which is naturally also a convex polytope, including all signalings for which $\gamma(\theta)$ is the optimal reporting for the type $\theta$ follower. Let $\Gamma$ be the set of all such surjective mappings, for which $\mathcal{P}^\gamma$ is non-empty, and the collection of these subsets corresponds to a partition of $\widehat{\Pi}$. As an important remark, we point out that due to the tie-breaking rule in the definition of  optimal reporting \eqref{eq:new_report}, $\mathcal{P}^{l}_k$  is not necessarily a closed convex polytope, and therefore, $\mathcal{P}^\gamma$ is not closed either, which may lead to a discontinuous $V^\mu$, as we shall see more clearly in the proof of the following lemma.

\begin{lemma}\label{lem:linear}
 $V^\mu$ is a piece-wise linear  function on $\widehat{\Pi}$, which is linear on every $\mathcal{P}^\gamma, \gamma\in \Gamma$.
\end{lemma}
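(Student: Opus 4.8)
The idea is that on a single region $\mathcal{P}^\gamma$ the follower's reporting behavior is frozen, so that $V^\mu$ collapses to a fixed finite linear combination of the ``atomic'' utilities $V(\bpi,(\hat\theta;\theta))$, each of which, once we are inside $\widehat{\Pi}$, is a linear functional of $\bpi$. \emph{Step 1 (freeze the reporting).} Fix $\gamma\in\Gamma$ and take any $\bpi\in\mathcal{P}^\gamma=\cap_{k\in[K]}\mathcal{P}^{\gamma(k)}_k$, so that $\psi_{\theta_k}(\bpi)=\gamma(\theta_k)$ for every $k\in[K]$. Substituting into $V^\mu(\bpi)=\sum_{\theta}\mu(\theta)V(\bpi,(\psi_\theta(\bpi);\theta))$ gives
\begin{align*}
V^\mu(\bpi)=\sum_{k\in[K]}\mu(\theta_k)\,V\big(\bpi,(\gamma(\theta_k);\theta_k)\big),
\end{align*}
a sum of $K$ terms in which each pair of type indices is now a constant independent of $\bpi$.

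\emph{Step 2 (each atomic utility is linear on $\widehat{\Pi}$).} Fix a pair $(\hat\theta,\theta)$. For $\bpi\in\widehat{\Pi}$ the marginal $\pi^{\hat\theta}$ is a probability vector supported on the finite set $\hat\Xi$, so
\begin{align*}
V(\bpi,(\hat\theta;\theta))=\sum_{b\in\supp(\pi^{\hat\theta})}\pi^{\hat\theta}(b)\,V^\theta(b)=\sum_{b\in\hat\Xi}\pi^{\hat\theta}(b)\,V^\theta(b),
\end{align*}
the extra terms vanishing. Here every coefficient $V^\theta(b)=\sum_{i\in[M]}b(i)L(i,BR^\theta(b))$ is a \emph{fixed real number}: by the singleton best-response assumption (ties broken in favor of the leader), $BR^\theta(b)$ is a well-defined action for each of the finitely many $b\in\hat\Xi$, including those $b$ that are vertices of some $\Xi_{\boldsymbol{j}}$ at which several actions would otherwise tie. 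Hence $\bpi\mapsto V(\bpi,(\hat\theta;\theta))$ is the restriction to $\widehat{\Pi}$ of a linear functional of the coordinate block $(\pi^{\hat\theta}(b))_{b\in\hat\Xi}$, hence linear in $\bpi\in\widehat{\Pi}\subset[0,1]^{K|\hat\Xi|}$.

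\emph{Step 3 (assemble).} Combining Steps 1 and 2, on $\mathcal{P}^\gamma$ the function $V^\mu$ is a $\mu$-weighted sum of finitely many linear functionals of $\bpi$, hence itself the restriction of a linear functional; this is linearity on $\mathcal{P}^\gamma$. Since $\Gamma$ is finite and the regions $\{\mathcal{P}^\gamma\}_{\gamma\in\Gamma}$ partition $\widehat{\Pi}$ (as noted before the statement: every $\bpi$ lies in $\mathcal{P}^\gamma$ for $\gamma(\theta_k):=\psi_{\theta_k}(\bpi)$, and distinct $\mathcal{P}^\gamma$ are disjoint), $V^\mu$ is piece-wise linear on $\widehat{\Pi}$ and linear on each $\mathcal{P}^\gamma$, as claimed.

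\textbf{Main obstacle.} The linearity on a fixed piece is essentially bookkeeping once supports are confined to $\hat\Xi$; the real subtlety is global. Because the tie-breaking rule in \eqref{eq:new_report} makes each $\mathcal{P}^{l}_k$, and hence each $\mathcal{P}^\gamma$, a convex polytope that need \emph{not} be closed, the distinct linear pieces of $V^\mu$ need not agree on the shared relative boundaries of adjacent regions, so $V^\mu$ may jump there. The proof should therefore make explicit that ``piece-wise linear'' carries no continuity guarantee — ideally by pointing to two touching regions $\mathcal{P}^\gamma,\mathcal{P}^{\gamma'}$ whose common boundary carries different limiting values of $V^\mu$ — since it is precisely this discontinuity that drives the later discussion of the possible nonexistence of an exact sigBSE. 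The only other point to verify is that the partition truly exhausts $\widehat{\Pi}$, i.e. that $\psi_{\theta_k}$ is single-valued everywhere on $\widehat{\Pi}$, which holds by the fixed tie-breaking convention on the reported type.
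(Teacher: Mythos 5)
Your proof is correct and follows essentially the same route as the paper's: freeze the reporting map to $\gamma$ on each $\mathcal{P}^\gamma$ and observe that $V^\mu|_{\mathcal{P}^\gamma}(\bpi)=\sum_{\theta}\mu(\theta)\sum_{b\in\hat{\Xi}}\pi^{\gamma(\theta)}(b)V^\theta(b)$ is a fixed linear functional of the coordinates of $\bpi$. Your closing remarks on the non-closedness of the pieces and the resulting possible discontinuity match the paper's own discussion following the lemma, so nothing further is needed.
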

\begin{proof}[Proof of Sketch]
	For any arbitrary $\gamma\in \Gamma$, for $\bpi\in\mathcal{P}^\gamma$, the optimal reporting is specified by the mapping $\gamma$. Hence, when restricting $V^\mu$ onto $\mathcal{P}^\gamma$, we have 
	\begin{align*}
		V^\mu|_{\mathcal{P}^\gamma}(\bpi):&=\sum_{\theta\in \Theta}\mu(\theta)V(\bpi,(\gamma(\theta);\theta))\\
		&=\sum_{\theta\in \Theta}\mu(\theta)\sum_{b\in \hat{\Xi}}\bpi^{\gamma(\theta)}(b)V^\theta(b),
	\end{align*} 
	which is essentially a sum of inner products between two vectors $(\bpi^{\gamma(\theta)})_{b\in\hat{\Xi}}$ and $(V^\theta)_{b\in\hat{\Xi}}$.
\end{proof}

Recall that $\mathcal{P}^\gamma$ may not be a closed convex polytope, hence we cannot conclude that $V^\mu$ is a continuous function. To illustrate the discontinuity,  we consider an example where there exists $\mathcal{B}(\bpi_1)$, an $\epsilon$-neighborhood of $\bpi_1$ in $\widehat{\Pi}$ (with respect to some norm, e.g. $\ell_2$-norm), and there are some elements belonging to another partition. Suppose $\bpi_1\in \mathcal{P}^{\gamma_1}$ and $\bpi_2\in \mathcal{B}(\bpi_1)\cap\mathcal{P}^{\gamma_2}$, since $\gamma_1\neq \gamma_2$, there exists at least one $\theta\in \Theta$ such that $\gamma_1(\theta)\neq \gamma_2(\theta)$, for which we let $\alpha=\gamma_1(\theta)$ and $\beta=\gamma_2(\theta)$. In this case, 
\begin{align*}
	V^\mu(\bpi_1)-V^\mu(\bpi_2)=\sum_{\theta\in \Theta}\mu(\theta)\sum_{b\in \hat{\Xi}}[\bpi_1^{\gamma_1(\theta)}(b)-\bpi_2^{\gamma_2(\theta)}(b)]V^\theta(b).
\end{align*}  
No matter how small $\epsilon$ is, the difference $[\bpi_1^{\alpha}(b)-\bpi_2^{\beta}(b)]V^\theta(b)$ cannot be bounded by an $\epsilon$ term. Nevertheless, for the bounded piecewise linear function $V^\mu$ defined on a compact convex polytope $\widehat{\Pi}$, its supremum is a finite value. Even though the supremum may not be achieved, that is, the exact equilibrium may not exist; it is always possible to find an approximate: a $\bpi^\epsilon\in \widehat{\Pi}$ satisfying 
\begin{align}\label{eq:eps_V}
	\sup_{\bpi\in \widehat{\Pi}}V^\mu(\bpi)-V^\mu(\bpi^\epsilon)<\epsilon,
\end{align}
which is the $\epsilon$-equilibrium of the sigBSG. We state our result more formally in the following.
\begin{theorem}
	For the \texttt{sigBSG}, the supremum of the expected utility is a finite value and can be approximated by an $\epsilon$-sigBSE, defined in \eqref{eq:eps_V}. 
\end{theorem}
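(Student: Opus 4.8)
The plan is to establish the two assertions of the theorem separately: finiteness of $V^{*}:=\sup_{\bpi\in\widehat{\Pi}}V^{\mu}(\bpi)$, and, for each $\epsilon>0$, existence of a feasible $\bpi^{\epsilon}\in\widehat{\Pi}$ satisfying \eqref{eq:eps_V}. Throughout I would work inside the finite-dimensional polytope $\widehat{\Pi}$, which is legitimate by \Cref{lem:reduce1}: since $\widehat{\Pi}\subseteq\Pi$ and every $\bpi\in\Pi$ has a value-matching image in $\widehat{\Pi}$, the supremum of $V^{\mu}$ over $\Pi$ equals its supremum over $\widehat{\Pi}$. Finiteness is then a one-liner from the boundedness of the payoffs: with $B:=\sup_{i,j}|L(i,j)|<\infty$, each $V^{\theta}(b)=\sum_{i}b(i)L(i,BR^{\theta}(b))$ is a convex combination of entries of $L$ and hence has absolute value at most $B$, so $|V(\bpi,(\hat{\theta};\theta))|\le B$ for every reporting and $|V^{\mu}(\bpi)|\le B$ because $\mu\in\Delta(\Theta)$; therefore $V^{*}\in[-B,B]$.

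For the approximation claim I would lean on the piecewise-linear structure from \Cref{lem:linear}. The collection $\{\mathcal{P}^{\gamma}\}_{\gamma\in\Gamma}$ is a \emph{finite} partition of $\widehat{\Pi}$ (there are at most $K^{K}$ candidate reporting maps $\gamma\colon\Theta\to\Theta$), and on each cell $\mathcal{P}^{\gamma}$ the function $V^{\mu}$ coincides with the single globally affine map $\widetilde{V}_{\gamma}(\bpi):=\sum_{\theta\in\Theta}\mu(\theta)\sum_{b\in\hat{\Xi}}\pi^{\gamma(\theta)}(b)V^{\theta}(b)$, which is continuous on all of $\widehat{\Pi}$. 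Let $\overline{\mathcal{P}^{\gamma}}$ denote the closure of $\mathcal{P}^{\gamma}$, a compact convex polytope contained in $\widehat{\Pi}$; then $\widetilde{V}_{\gamma}$ attains its maximum $V^{*}_{\gamma}$ over $\overline{\mathcal{P}^{\gamma}}$, in fact at a vertex. The point I would then argue is that the supremum of $V^{\mu}$ over the \emph{possibly non-closed} cell $\mathcal{P}^{\gamma}$ still equals $V^{*}_{\gamma}$: "$\le$" is clear since $\mathcal{P}^{\gamma}\subseteq\overline{\mathcal{P}^{\gamma}}$, and for "$\ge$" one takes a maximizer $\bpi^{*}\in\overline{\mathcal{P}^{\gamma}}$, approximates it by a sequence $\bpi_{n}\in\mathcal{P}^{\gamma}$ using density of a set in its closure, and uses continuity of $\widetilde{V}_{\gamma}$ to get $V^{\mu}(\bpi_{n})=\widetilde{V}_{\gamma}(\bpi_{n})\to V^{*}_{\gamma}$. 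Taking the finite maximum over $\gamma\in\Gamma$ yields $V^{*}=\max_{\gamma\in\Gamma}V^{*}_{\gamma}$, which re-confirms finiteness and, as a bonus, exhibits $V^{*}$ as the output of a finite search over the vertices of the polytopes $\overline{\mathcal{P}^{\gamma}}$.

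To produce the $\epsilon$-sigBSE I would then fix $\epsilon>0$, pick $\gamma^{*}\in\Gamma$ with $V^{*}_{\gamma^{*}}=V^{*}$ and a maximizer $\bpi^{*}\in\overline{\mathcal{P}^{\gamma^{*}}}$, and use the density/continuity argument once more to choose $\bpi^{\epsilon}\in\mathcal{P}^{\gamma^{*}}\subseteq\widehat{\Pi}$ with $\widetilde{V}_{\gamma^{*}}(\bpi^{\epsilon})>V^{*}-\epsilon$. Because $\bpi^{\epsilon}$ lies in $\mathcal{P}^{\gamma^{*}}$ we have $V^{\mu}(\bpi^{\epsilon})=\widetilde{V}_{\gamma^{*}}(\bpi^{\epsilon})$, hence $V^{*}-V^{\mu}(\bpi^{\epsilon})<\epsilon$, i.e.\ \eqref{eq:eps_V} holds; and $\bpi^{\epsilon}$ automatically carries the follower's optimal reporting $\psi_{\theta}(\bpi^{\epsilon})=\gamma^{*}(\theta)$, so it is a genuine approximate equilibrium.

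I expect the main obstacle to be the careful handling of the non-closedness of the cells $\mathcal{P}^{\gamma}$, which stems from the tie-breaking convention in \eqref{eq:new_report}: each $\mathcal{P}^{\gamma}$ is only a \emph{relatively open} polytope region, so $\widetilde{V}_{\gamma}$ need not attain its supremum on it --- this is exactly why the theorem can promise only an $\epsilon$-approximation rather than an exact sigBSE. The remedy (pass to the closure, use compactness for an attained vertex maximum, then pull a near-maximizer back into the open cell via density and continuity) is the technical heart of the argument. A secondary point worth double-checking is that $\widetilde{V}_{\gamma}$ is the restriction of \emph{one} affine function, so that continuity genuinely applies at points where the optimal reporting map jumps; this is precisely the content of the computation in the proof of \Cref{lem:linear}. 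The finiteness half, in contrast, uses nothing beyond the standing bounded-payoff assumption.
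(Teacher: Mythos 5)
Your proof is correct, but it takes a more constructive route than the paper's. The paper proves this theorem with a minimal argument: boundedness of the piecewise-linear $V^\mu$ on the compact polytope $\widehat{\Pi}$ gives a finite supremum, then it takes a maximizing sequence, pigeonholes a subsequence into a single cell $\mathcal{P}^\gamma$ (finitely many cells), and declares a sufficiently late term of that sequence to be the $\epsilon$-equilibrium --- essentially the definition of the supremum dressed with the partition structure. You instead work cell by cell: you extend $V^\mu|_{\mathcal{P}^\gamma}$ to the globally affine $\widetilde{V}_\gamma$ (this is exactly what \Cref{lem:linear} provides), pass to the closure $\overline{\mathcal{P}^\gamma}$ to get an attained (vertex) maximum $V^*_\gamma$ by compactness, show $\sup_{\mathcal{P}^\gamma}V^\mu=V^*_\gamma$ via density of the cell in its closure plus continuity of $\widetilde{V}_\gamma$, and then take the finite maximum over $\gamma$ and pull a near-maximizer back into the open cell to produce $\bpi^\epsilon$ with its reporting map fixed at $\gamma^*$. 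What your version buys is more than the theorem asks: the identity $V^*=\max_{\gamma\in\Gamma}\max_{\overline{\mathcal{P}^\gamma}}\widetilde{V}_\gamma$ with vertex attainment is precisely the geometric content the paper develops only \emph{after} this theorem (the discussion of extreme points of $\overline{\mathcal{P}^\gamma}$, the set $\Pi^\delta$, and the finite-search theorem), so your argument effectively proves the approximation claim and anticipates the algorithmic result in one pass; the paper's proof is shorter but non-constructive about where the $\epsilon$-equilibrium lives. One cosmetic difference: you allow all maps $\gamma\colon\Theta\to\Theta$ (at most $K^K$), whereas the paper restricts $\Gamma$ to surjective maps with nonempty cells; your version is if anything the safer reading of the claim that the cells partition $\widehat{\Pi}$, and it does not affect correctness since only finiteness of $\Gamma$ is used.
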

\begin{proof}
	For any sequence $\{\bpi_k\}_{k\in \N}\in \widehat{\Pi}$ that converges to the supremum value, that is
	$\lim_{k\rightarrow\infty}V^{\mu}(\bpi_k)=\sup_{\bpi\in \widehat{\Pi}}V^\mu(\bpi)$, there always exists a $\gamma\in \Gamma$, such that $\mathcal{P}^\gamma$ include a subsequence of $\{\bpi_k\}_{k\in \N}$. This is because if every $\mathcal{P}^\gamma$ only comprises a finite part of the sequence, then the original sequence would be finite, as there are only finite partitions, contradicting the definition of supremum. Without loss of generality, we still denote the subsequence by $\{\bpi_k\}_{k\in \N}$, and a $\bpi_K$ with sufficiently large $K\in \N$ is naturally an $\epsilon-$ equilibrium. 
\end{proof}
It should be noted that in basic Bayesian Stackelberg game, the notion of approximate equilibrium is generally not used in the related literature\footnote{As a solution concept per se, the approximate equilibrium may not carry significant meaning in Stackelberg games \citep{fudenberg}. However, we must point out that from the perspective of algorithmic game theory, the approximability results are important both in theory, and practice \citep{Algorithmic2007}.}, as the communication between players and the information asymmetry is one-sided: the leader makes her commitment with her incomplete knowledge about the follower, and the follower is purely receiver. Hence, from the leader's perspective, when announcing the commitment, she must take into account the possible best responses of followers of various types, which is generally described by a set of linear constraints (see the programming-based computation approaches in \citep{conitzer06compute_SG,conitzer16mixed_sg}). However, in our proposed Stackelberg game model, the approximate equilibrium arises from the two-way communication and double-sided information asymmetry, where the leader (the follower) reveals partial information regarding her realized action (his private type), which is hidden from the other side. In the presence of the follower's misreporting, as argued in \Cref{lem:linear} and its subsequent discussion, the leader may not be able to achieve the supremum in the end, and the approximate equilibrium is the best she can hope.      

In the following, we pinpoint the cause of the subtle difference in solution concepts between the baseline Bayesian Stackelberg model and  our proposed model. Considering the geometric properties of $\widehat{\Pi}$ and the piece-wise linearity of $V^\mu$, we adopt a mathematical programming viewpoint. Moreover, the geometric intuition, in fact, leads to a finite-time algorithm for searching the $\epsilon-$Stackelberg equilibrium. Similar to our partition $\mathcal{P}^\gamma$, in the baseline Bayesian Stackelberg model, each follower's pure strategy also corresponds to a convex polytope region in the space mixed strategy, and when restricting the leader's utility to an arbitrary polytope, the maximum value is achieved at some extreme point of the polytope. This is not surprising as seeking the optimal leader's strategy is equivalent to solving LPs, and naturally, the solution is some vertex of the feasible domain, and we refer the reader to \citep{conitzer06compute_SG} for more details and to \citep{letchford09br_learn} for a visualization.  

However, due to the tie-breaking rule in the follower's optimal reporting strategy, $\mathcal{P}^\gamma$ may be open for some $\gamma\in \Gamma$. Even though $V^\mu$ is a linear function on $\mathcal{P}^\gamma$, the supremum of $V^\mu|_{\mathcal{P}^\gamma}$ may not be achieved within the region. It is only guaranteed that the supremum value of $V^\mu|_{\mathcal{P}^\gamma}$ can be attained by some extreme point of the closure of $\mathcal{P}^\gamma$, denoted by $\overline{\mathcal{P}^\gamma}$. Therefore, by applying the same argument to other convex polytopes $\mathcal{P}^{\gamma}, \gamma\in \Gamma$, it can be seen that the supremum value of $V^{\mu}$ over $\widehat{\Pi}$ may not be achievable. On the other hand, despite the possible non-existence of exact equilibrium, it is always possible to approximate it by finding a point sufficiently close to the desired extreme point of the closure. Formally, we have the following result. For any $\epsilon>0$, let $\delta=\frac{\epsilon}{\|V^\theta\|_2}$, where $\|V^\theta\|_2$ denotes the $\ell_2$-norm of the vector $(V^\theta)_{b\in\hat{\Xi}}$. We define $\Pi^\delta$ to be a finite set of $\bpi$ as follow: for every $\gamma\in \Gamma$ and every $\bpi$ that is an extreme point of $\mathcal{P}^\gamma$, if $\bpi\in \mathcal{P}^\gamma$, then $\bpi\in \Pi^\delta$. If not, let $\Pi^\delta$ include an arbitrary element $\bpi'$ in $\mathcal{P}^\gamma$ such that $\|\bpi-\bpi'\|_2\leq \delta$. In other words, the element in the set $\Pi^\delta$ is either an extreme point of $\mathcal{P}^\gamma$ for some $\gamma$ or a point that is close enough to some extreme point of the closure $\overline{\mathcal{P}^\gamma}$. Notice that $|\Pi^\delta|<\infty$, since there are only finitely many partitions, the leader at most needs to compute the utility at $|\Pi^\delta|$ points to identify the $\epsilon-$equilibrium, which is formally stated below.  
 
\begin{theorem}
	For the \texttt{sigBSG}, and any $\epsilon>0$, let $\Pi^\delta$ be defined as above. There exists an $\epsilon-$equilibrium in $\Pi^\delta$, which can be found in $O\left({K^{K+2}\choose K|\hat{\Xi}|}K^K\right)$. 
\end{theorem}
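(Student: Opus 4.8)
The plan is to leverage the two structural facts already established — that the leader's optimal signaling can be sought inside the finite‑dimensional polytope $\widehat{\Pi}$ (\Cref{lem:reduce1}) and that $V^\mu$ is linear on each cell $\mathcal{P}^\gamma$ of the finite partition $\{\mathcal{P}^\gamma\}_{\gamma\in\Gamma}$ (\Cref{lem:linear}) — and upgrade the mere existence statement of the preceding theorem into a concrete vertex‑enumeration procedure over the cells. Concretely I would (i) show the supremum value of $V^\mu$ equals the value of a linear functional at some extreme point of the \emph{closure} $\overline{\mathcal{P}^\gamma}$ of a single cell; (ii) show that retreating from that extreme point into the relatively open cell $\mathcal{P}^\gamma$ by at most $\delta$ costs at most $\epsilon$ in utility, so $\Pi^\delta$ contains an $\epsilon$-sigBSE; and (iii) bound $|\Pi^\delta|$ and the per‑candidate work.

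For (i)–(ii): take a maximizing sequence $\{\bpi_k\}\subset\widehat{\Pi}$ with $V^\mu(\bpi_k)\to\sup_{\widehat{\Pi}}V^\mu$, as in the preceding theorem. Since $|\Gamma|<\infty$, some cell $\mathcal{P}^\gamma$ contains an infinite subsequence, so after relabeling $\{\bpi_k\}\subset\mathcal{P}^\gamma$. By \Cref{lem:linear}, $V^\mu|_{\mathcal{P}^\gamma}$ coincides with the linear functional $\ell(\bpi)=\sum_{\theta\in\Theta}\mu(\theta)\sum_{b\in\hat{\Xi}}\pi^{\gamma(\theta)}(b)V^\theta(b)$, which is defined and continuous on the compact convex polytope $\overline{\mathcal{P}^\gamma}$; by density of $\mathcal{P}^\gamma$ in $\overline{\mathcal{P}^\gamma}$ and continuity, $\sup_{\widehat{\Pi}}V^\mu=\sup_{\mathcal{P}^\gamma}\ell=\max_{\overline{\mathcal{P}^\gamma}}\ell$, attained at some $\bpi^\star\in\operatorname{Vert}(\overline{\mathcal{P}^\gamma})$. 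If $\bpi^\star\in\mathcal{P}^\gamma$ then $\bpi^\star$ is an exact sigBSE lying in $\Pi^\delta$ by construction and we are done. Otherwise $\bpi^\star$ lies on the part of $\partial\mathcal{P}^\gamma$ excised by the tie‑breaking convention; since $\bpi^\star\in\overline{\mathcal{P}^\gamma}$, every $\delta$-ball about $\bpi^\star$ meets $\mathcal{P}^\gamma$, so $\Pi^\delta$ contains some $\bpi'\in\mathcal{P}^\gamma$ with $\|\bpi^\star-\bpi'\|_2\le\delta$. Using $V^\mu(\bpi')=\ell(\bpi')$ and writing $\ell$ as an inner product between the stacked coordinate vector of $\bpi$ and the fixed coefficient vector with entries $\mu(\theta)V^\theta(b)$, Cauchy–Schwarz gives $\sup_{\widehat{\Pi}}V^\mu-V^\mu(\bpi')=\ell(\bpi^\star)-\ell(\bpi')\le\big(\max_{\theta}\|V^\theta\|_2\big)\|\bpi^\star-\bpi'\|_2\le\|V^\theta\|_2\,\delta=\epsilon$ by the choice of $\delta$, so $\bpi'$ is the desired $\epsilon$-sigBSE in $\Pi^\delta$.

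For (iii): there are at most $K^K$ maps $\Theta\to\Theta$, so $|\Gamma|\le K^K$ — the outer factor. Fixing $\gamma$, the closed cell $\overline{\mathcal{P}^\gamma}\subset\R^{K|\hat{\Xi}|}$ is cut out by finitely many linear (in)equalities: the non‑negativity and normalization constraints of $(\Delta(\hat{\Xi}))^K$, the consistency equalities defining $\widehat{\Pi}$, and, for each of the $K$ types, the at most $K$ inequalities $U(\bpi,(\gamma(\theta);\theta))\ge U(\bpi,(\beta;\theta))$ expressing that $\gamma(\theta)$ is an optimal report (each $U(\bpi,(\cdot;\theta))$ being linear in $\bpi$). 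A crude count bounds the number of these half‑spaces by $K^{K+2}$ (details in the appendix), so, since each vertex of a polytope of dimension $\le K|\hat{\Xi}|$ is the unique point at which $K|\hat{\Xi}|$ linearly independent facet constraints are tight, $|\operatorname{Vert}(\overline{\mathcal{P}^\gamma})|\le\binom{K^{K+2}}{K|\hat{\Xi}|}$. For each such vertex the leader solves a linear system, tests membership in $\mathcal{P}^\gamma$, possibly projects into $\mathcal{P}^\gamma$ for a $\delta$-close $\bpi'$, and evaluates $V^\mu$ — all polynomial in the input and absorbed into the $O(\cdot)$. Multiplying the number of cells by the per‑cell vertex bound yields $O\!\big(\binom{K^{K+2}}{K|\hat{\Xi}|}K^{K}\big)$.

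I expect the main obstacle to be step (i)–(ii): because $\mathcal{P}^\gamma$ need not be closed, one cannot simply invoke ``a linear function on a polytope attains its maximum at a vertex'' — the supremum may genuinely be unattained — so the argument must pass to $\overline{\mathcal{P}^\gamma}$, locate the optimal vertex there, and then quantify the loss incurred by backing off into the relatively open cell. Making that loss exactly $\le\epsilon$ is what forces $\delta=\epsilon/\|V^\theta\|_2$, and the clean Lipschitz constant $\max_\theta\|V^\theta\|_2$ is available precisely because $V^\mu$ is \emph{linear} (not merely piecewise linear) once the cell is fixed. A secondary point to check carefully is that the linear extension $\ell$ of $V^\mu|_{\mathcal{P}^\gamma}$ has supremum equal to $\sup_{\widehat{\Pi}}V^\mu$ and nothing larger — which holds because the maximizing subsequence already lies in $\mathcal{P}^\gamma$ and $\ell$ is continuous on the compact set $\overline{\mathcal{P}^\gamma}$ — and that the finitely many half‑space counts used in (iii) indeed cover all facets of every $\mathcal{P}^\gamma$.
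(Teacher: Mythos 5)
Your proposal is correct and follows essentially the same route as the paper: identify the supremum of $V^\mu$ with the value of the linear functional at an extreme point of the closure $\overline{\mathcal{P}^\gamma}$ of one cell, back off into the (possibly non-closed) cell by at most $\delta=\epsilon/\|V^\theta\|_2$ to lose at most $\epsilon$, and enumerate at most $K^K$ cells with at most ${K^{K+2}\choose K|\hat{\Xi}|}$ candidate vertices each. Your Cauchy--Schwarz justification of the Lipschitz constant and the density argument for passing to the closure are exactly the missing details the paper leaves implicit, and your constraint count is asserted at the same level of detail as the paper's.
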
 
  The following presents a direct corollary from the geometric characterization of the function $V^\mu$, which shows that the leader does benefit from the signaling (the equality holds when the revelation principle \citep{myerson82pa} holds, which is the case here). This is because any BSE can be viewed as a special signaling commitment with uninformative signaling, as shown in the running example. 
\begin{corollary}
	Let the supremum value of $V^\mu$ be denoted by $V_{\text{sigBSE}}$ and the leader's equilibrium payoff under vanilla BSE by $V_{\text{BSE}}$, then $V_{\text{sigBSE}}\geq V_{\text{BSE}}.$
\end{corollary}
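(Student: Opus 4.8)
The plan is to realize the baseline Bayesian Stackelberg equilibrium as a degenerate, fully uninformative signaling commitment and then invoke optimality of the supremum. Let $\mathbf{x}^*\in\Delta_M$ be a leader strategy attaining $V_{\text{BSE}}$ in the baseline BSG, so that $V_{\text{BSE}}=\sum_{\theta\in\Theta}\mu(\theta)\sum_{i\in[M]}\mathbf{x}^*_i L(i,BR^\theta(\mathbf{x}^*))$ (the follower of each type best responds with $BR^\theta(\mathbf{x}^*)$, ties in favor of the leader). Define $\bpi_0=[\pi_0^{\theta_1},\dots,\pi_0^{\theta_K}]$ by $\pi_0^\theta:=\delta_{\mathbf{x}^*}$, the Dirac measure supported on the single belief $\mathbf{x}^*$, for every $\theta\in\Theta$; this is the probabilistic description of the signaling scheme that sends one fixed signal with probability one regardless of the realized action and the reported type. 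The first step is to check $\bpi_0\in\Pi$: since $\supp(\pi_0^\theta)=\{\mathbf{x}^*\}$, we trivially have $\sum_{b\in\supp(\pi_0^\theta)}\pi_0^\theta(b)b(i)=\mathbf{x}^*_i$ for all $\theta$ and $i$, so the consistency constraint holds. If one prefers to stay inside the finite-dimensional $\widehat{\Pi}$, one applies \Cref{lem:reduce1} to $\bpi_0$, which preserves the leader's per-type utility and is therefore harmless; in any case $\sup_{\widehat{\Pi}}V^\mu=\sup_{\Pi}V^\mu$.

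Next I would verify that $\bpi_0$ neutralizes the follower's misreporting incentive and reproduces the baseline best response. Because every reported type $\hat{\theta}$ induces the same singleton support $\{\mathbf{x}^*\}$, we get $U(\bpi_0,(\hat{\theta};\theta))=F^\theta(\mathbf{x}^*)$ independently of $\hat{\theta}$, so whichever type $\psi_\theta(\bpi_0)$ the (consistent) tie-breaking rule selects, the induced belief is $\mathbf{x}^*$. The follower then best responds with $BR^\theta(\mathbf{x}^*)$, which is exactly the best response used in the baseline model (same definition, same tie-breaking convention). Consequently $V(\bpi_0,(\psi_\theta(\bpi_0);\theta))=V^\theta(\mathbf{x}^*)=\sum_{i\in[M]}\mathbf{x}^*_i L(i,BR^\theta(\mathbf{x}^*))$ for every $\theta\in\Theta$.

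Summing against the prior then gives $V^\mu(\bpi_0)=\sum_{\theta\in\Theta}\mu(\theta)V^\theta(\mathbf{x}^*)=V_{\text{BSE}}$, i.e. the leader's payoff under the degenerate signaling equals her baseline BSE payoff. Since $V_{\text{sigBSE}}$ is by definition the supremum of $V^\mu$ over all consistent signaling commitments, it dominates the value at the particular commitment $\bpi_0$, hence $V_{\text{sigBSE}}\ge V^\mu(\bpi_0)=V_{\text{BSE}}$, which is the claim.

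I expect the only delicate point to be the bookkeeping around the follower's reporting and tie-breaking: one must make sure that uninformative signaling really forces $U(\bpi_0,(\hat{\theta};\theta))$ to be constant in $\hat{\theta}$, so that no admissible choice of $\psi_\theta$ can move the leader off the belief $\mathbf{x}^*$, and that the best-response and tie-breaking conventions of the baseline BSG and of \texttt{sigBSG} coincide at the belief $\mathbf{x}^*$; everything else is immediate. The parenthetical remark about equality under the revelation principle in the statement is not needed for the inequality — indeed the running example in \Cref{sec:example} shows the gap can be strict — and merely records that restricting to direct signaling devices is without loss of generality here.
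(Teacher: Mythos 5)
Your proof is correct and is exactly the paper's argument: the paper justifies the corollary by noting that any BSE can be realized as a signaling commitment with uninformative signaling (as in the running example), which is precisely your Dirac construction $\pi_0^\theta=\delta_{\mathbf{x}^*}$, made rigorous by checking consistency, the irrelevance of the report, and (via \Cref{lem:reduce1}) membership questions about $\widehat{\Pi}$. Your write-up simply fills in the bookkeeping the paper leaves implicit, so no further changes are needed.
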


\section{Online Ignorant Learning for Optimal Signaling}
What we have discussed so far falls within the realm of information design or Bayesian persuasion with privately informed receiver \citep{emir11persuasion,kolo17private}, where the leader already acquires enough information, including possible follower's types as well as the prior distribution $\mu^*$, to compute the sigBSE offline. However, in reality, the leader may not even know the prior distribution (an ignorant leader), and hence she has to learn a signaling commitment with a satisfying online performance. In this section, we aim to provide an online learning framework that enables the leader to adaptively adjust her signaling commitment based on her past experience, and we show that the proposed learning method guarantees that the online performance of the learned commitment is comparable to the equilibrium payoff under sigBSE, the optimal expected utility.

\subsection{Repeated Play with Signaling Commitment}
Consider a repeated Stackelberg game over a time horizon of length $T$, where the leader has the perfect recall of historical play while the follower is memoryless. The assumption implies that only the leader will adapt herself to the changing environment, whereas the follower always behaves myopically, which we believe reflects the market competition: leading firms are usually those who have survived numerous competitions, and enjoy longer histories, whereas emerging start-ups come and go.

Mathematically, after each round of play, the current follower is effectively removed from the game, and Nature, according to the distribution $\mu^*$, randomly draws a new follower, who will play the game with the long-standing leader in the next round. The $t$-th round includes five stages: 1) Nature creates a new follower of type $\theta_t$; 2) the leader announces her signaling commitment $\sigma_t=(\mathbf{x}_t,\mathbf{C}_t)$; 3) the leader samples an action $i_t$, which cannot be observed by the follower;  4)the follower reports his type according to the optimal reporting strategy $\psi_{\theta_t}(\sigma_t)$, defined in \eqref{eq:report}; 5) the leader sends a signal according to her commitment $\sigma_t$; 6) after observing the signal, the follower plays his best response with respect to the induced posterior belief; 6) the reward is revealed to the leader.

\paragraph{Information Structures} From leader's perspective, the information she acquires regarding the historical play is of vital importance when designing online learning algorithms \citep{tao_info}. Therefore, before detailing the proposed learning method, we first clarify what information is available to the leader. 

Directly from the repeated Stackelberg game model described above, we can see that the leader is aware of the following: her signaling commitments $\sigma_{t\in [T]}$, her realized actions $i_{t\in [T]}$, follower's reported type $\hat{\theta}_{t\in [T]}$ and follower's best response $j_{t\in [T]}$, based on which we can further assume that the leader can observe the follower's true type $\theta_{t\in [T]}$. In fact, it is always viable to unearth the follower's true type. If the leader adopts incentive-compatible commitment specified by \eqref{eq:ic} in the following linear programming, then the reported type equals the true type.
\begin{align}
    \max_{\sigma=(\mathbf{x},\mathbf{C})}\quad & \sum_{\theta\in \Theta} \mu^*(\theta) \sum_{i\in [M], j\in [N]} C^{\theta}_{ij}L_{ij}\tag{sigLP}\label{eq:sigbse}\\
    \st \quad & \sum_{i\in [M]}\mathbf{x}_i=1,\nonumber\\
    & \sum_{j\in [N]}C^\theta_{ij}=\mathbf{x}_i, \text{ for all } i, \theta\nonumber\\
    & C^\theta_{ij}\geq 0, \text{ for all } i,j,\theta\nonumber\\
    & \sum_{i\in [M]} {C}^\theta_{ij}F^\theta_{ij}\geq \sum_{i\in [M]}{C}^\theta_{ij}F^\theta_{ij'}, \text{ for all } j\neq j',\label{eq:obedience}\\
    & \sum_{i\in [M]}\sum_{j\in [N]} C^{\theta}_{ij}F^\theta_{ij}\geq \nonumber\\
    &\qquad\sum_{j\in [N]}\max_{j'} \sum_{i\in [M]}C^{\hat{\theta}}_{ij}F^\theta_{i,j'}, \text{ for all } \hat{\theta}\neq \theta. \label{eq:ic}
\end{align}

 On the other hand, if only an obedience signaling commitment is utilized, though the follower may misreport, the leader can immediately realize this follower's deception, as her recommendation is ignored by the follower. Hence, the leader can do a reverse engineering to \eqref{eq:report}, based on her commitment $\sigma_t$ and the follower's response $j_t$. Even though chances are that the leader may find multiple candidate solutions to the inverse problem or it so happens that the deceitful follower is still  obedient, we do not consider these cases in our learning model. To sum up, in this repeated play, the history is $H_t:=\{\sigma_{\tau\in [t]}, \hat{\theta}_{\tau\in [t]}, \theta_{\tau\in [t]}, i_{\tau\in[t]}, j_{\tau\in[t]}\}$.

 As a high-level summary of our repeated Stackelberg game model, we point out that at each round of the repeated play, the leader moves first, where the commitment is based on the history, then the follower of an unknown type best responds in the sense that he both reports optimally (see \eqref{eq:report}) and plays optimally.  As we have mentioned \Cref{sec:intro}, the question we ask is whether the leader still enjoys the first-mover advantage, or, equivalently, does there exist an online learning algorithm that guarantees the online performance? 

To evaluate the performance of the learning process, we need a proper metric. Given an online learning algorithm $\mathfrak{L}$, its online performace is defined as $V_{\mathfrak{L}}(T):=\sum_{t\in[T]}L(i_t,j_t)$. We note that $V_{\mathfrak{L}}(T)$ can be viewed as a random function of the learning algorithm $\mathfrak{L}$ as well as $T$.  Let the leader's sigBSE equilibrium payoff under the prior distribution $\mu^*$ be $\operatorname{OPT}(\mu^*)$, then the optimal total expected payoff is $\operatorname{OPT}(\mu^*)\cdot T$. Therefore, we can evaluate the learning algorithm using  the following performance gap 
\begin{align}\label{eq:gap}
    \operatorname{Gap}_{\mathfrak{L}}(\mu^*,T):=\operatorname{OPT}(\mu^*)\cdot T- V_{\mathfrak{L}}(T).
\end{align}
Then the question reduces to whether there exists an algorithm $\mathfrak{L}$ such that $\operatorname{Gap}_{\mathfrak{L}}(\mu^*,T)$ grows sublinearly with respect to $T$ with a high probability, implying that the average payoff obtained by the leader is as good as the sigBSE payoff.

We answer this question affirmatively, and the rest of this section is devoted to developing such online learning framework.  We start with a simple case, where the leader is allowed to use incentive-compatible commitment [see \eqref{eq:ic} in \eqref{eq:sigbse}], and we show that the follow-the-leader policy \citep{csaba_bandit} suffices to provide a desired online performance. 

\subsection{Follow-the-Incentive-Compatible-Leader} Due to the constraints in \eqref{eq:ic}, if  the leader commits to an incentive compatible $\sigma_t=(\mathbf{x}_t,\mathbf{C}_t)$ in every round (referred to as the IC-leader), the follower will always truthfully reports his type and follow the leader's recommendation. In this case, the leader's expected utility under a signaling commitment $\sigma$ when the follower is of type $\theta$ can be easily computed as
\begin{align}\label{eq:value_sig}
    V(\sigma,\theta):=\sum_{i\in [M], j\in [N]}C^\theta_{ij}L_{ij}.
\end{align}
One simple principle, referred to as follower-the-leader in the literature \citep{csaba_bandit}, to prescribe an online learning process is the idea of optimality in hindsight: the signaling commitment at the next round $\sigma_t$ should be the optimal one in hindsight up to the current stage, which maximizes the cumulative utility in the past had it been applied. In other words, according to the current history $H_t$, $\sigma_{t+1}$ is determined by
\begin{align}
\label{eq:ftl}
    \sigma_{t+1}:=\argmax_{\sigma} \sum_{\tau\in [t]}V(\theta_t,\sigma),\tag{FTL-IC}
\end{align}
where $\sigma$ is subject to the constraints in \eqref{eq:sigbse}. Since the commitment in each round satisfies the IC constraints, we refer to the learning algorithm in \eqref{eq:ftl} as follow-the-incentive-compatible-leader. 

Furthermore, we denote $\mu_t$ the empirical distribution of $\{\theta_{\tau\in [t]}\}$ at time $t$, that is $\mu_t(\theta):=\sum_{\tau\in [t]}\mathbb{I}(\theta_t=\theta)/t$, where $\mathbb{I}(\cdot)$ is the indicator function. Then, it can be seen that $\sigma_{t+1}$ is the solution to \eqref{eq:sigbse} with $\mu^*$ being replaced by $\mu_t$. In other words, the leader always chooses the signaling commitment specified by the sigBSE, with the prior being $\mu_t$. According to the concentration bound, $\mu_t$ converges to $\mu^*$ in point-wise with high probability (i.i.d. sampling) \citep{compress}, and naturally, \eqref{eq:ftl} gives a desired online performance. We state our result formally in the following theorem.
\begin{theorem}\label{thm:ftl}
    For $t\in [T]$ and any $\mu^*\in \Delta(\Theta)$, with probability at least $1-T^{1-\frac{3\sqrt{M}}{56}}-T^{-8M}$, the learning algorithm \eqref{eq:ftl} satisfies 
    \begin{align*}
        \operatorname{Gap}(\mu^*, T)\leq O(\sqrt{MT}(1+2\sqrt{\log T})).
    \end{align*}
\end{theorem}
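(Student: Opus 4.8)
The plan is to bound the gap in \eqref{eq:gap} by decomposing it into two pieces: (i) the difference between the optimal per-round payoff $\operatorname{OPT}(\mu^*)$ and the per-round payoff actually obtained by the IC-leader who plays the sigBSE commitment for the empirical distribution $\mu_t$, and (ii) the accumulated stochastic fluctuation coming from the fact that the realized payoff $L(i_t,j_t)$ is a random draw whose conditional expectation is $V(\sigma_t,\theta_t)$. Since \eqref{eq:ftl} commits, at round $t+1$, to the sigBSE solution for the empirical prior $\mu_t$ (as argued right before the theorem), the first piece is controlled once we understand how sensitively $\operatorname{OPT}(\cdot)$ — equivalently the optimal value of the linear program \eqref{eq:sigbse} — depends on the prior, together with how fast $\mu_t$ concentrates around $\mu^*$.

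\textbf{First I would establish Lipschitz-type continuity of $\operatorname{OPT}$ in the prior.} The objective of \eqref{eq:sigbse} is linear in $\mu$ for each fixed feasible $\sigma$, and the feasible set (constraints \eqref{eq:obedience}, \eqref{eq:ic}, the simplex and marginal constraints) does not depend on $\mu$; hence $\operatorname{OPT}(\mu)=\max_{\sigma\,\mathrm{feasible}}\sum_\theta \mu(\theta) V(\sigma,\theta)$ is a maximum of linear functions of $\mu$, so it is convex and, because payoffs are bounded (say by $\|L\|_\infty$), it is Lipschitz in $\mu$ with respect to $\ell_1$: $|\operatorname{OPT}(\mu)-\operatorname{OPT}(\mu')|\le \|L\|_\infty\|\mu-\mu'\|_1$. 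More importantly, if $\sigma_{t+1}$ is optimal for $\mu_t$ then its value under the true prior satisfies $\sum_\theta \mu^*(\theta)V(\sigma_{t+1},\theta)\ge \operatorname{OPT}(\mu_t)-\|L\|_\infty\|\mu^*-\mu_t\|_1 \ge \operatorname{OPT}(\mu^*)-2\|L\|_\infty\|\mu^*-\mu_t\|_1$. So the per-round expected shortfall of the IC-leader at round $t+1$ is at most $2\|L\|_\infty\|\mu^*-\mu_t\|_1$.

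\textbf{Next I would sum the concentration error and handle the martingale term.} By a uniform (over $\theta\in\Theta$) Chernoff/Hoeffding bound for i.i.d. sampling of types, with high probability $\|\mu^*-\mu_t\|_1 \lesssim \sqrt{M/t}$ simultaneously for all $t\le T$ — this is where the $\sqrt M$ and the probability $1-T^{1-3\sqrt M/56}-T^{-8M}$ enter (the exact constants come from optimizing the Chernoff exponent and a union bound over $t$ and over the $M$ coordinates). Summing $\sum_{t\le T}\sqrt{M/t}=O(\sqrt{MT})$ gives the bias contribution. For piece (ii), conditioned on $H_t$ and $\theta_t$, the realized reward $L(i_t,j_t)$ has conditional mean $V(\sigma_t,\theta_t)$ (the IC commitment makes reporting truthful and the recommendation obeyed), and is bounded; so the partial sums of $L(i_t,j_t)-V(\sigma_t,\theta_t)$ form a bounded-difference martingale, and Azuma–Hoeffding yields a deviation of $O(\sqrt{T\log T})$ with high probability. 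Combining, $\operatorname{Gap}(\mu^*,T)\le O(\sqrt{MT})+O(\sqrt{MT\log T})+O(\sqrt{T\log T}) = O(\sqrt{MT}(1+2\sqrt{\log T}))$, absorbing constants, and a final union bound over the two high-probability events produces the stated probability.

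\textbf{The hard part will be} pinning down the concentration step so that it delivers exactly the probability $1-T^{1-3\sqrt M/56}-T^{-8M}$ claimed: this requires being careful about whether one first fixes $t$ and applies a per-coordinate Chernoff bound with the right exponent, then unions over the $M$ coordinates and over $t\in[T]$, and tracking how the $\sqrt{M}$ scaling of the allowed deviation interacts with the union-bound cost — the slightly unusual exponents $3\sqrt M/56$ and $8M$ strongly suggest a specific two-regime argument (one bound for moderate deviations, one for large $M$) that must be reproduced verbatim rather than hand-waved. A secondary subtlety is making sure the martingale in piece (ii) is genuinely adapted: one must verify that $\sigma_t$ is $H_{t-1}$-measurable (true, since \eqref{eq:ftl} depends only on past types) and that the tie-breaking and obedience assumptions make $\E[L(i_t,j_t)\mid H_{t-1},\theta_t]=V(\sigma_t,\theta_t)$ hold exactly, so that no additional bias term sneaks in.
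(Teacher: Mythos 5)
Your route matches the one the paper intends: the text preceding the theorem argues exactly that \eqref{eq:ftl} makes $\sigma_{t+1}$ the sigBSE commitment for the empirical prior $\mu_t$ and then invokes the concentration of $\mu_t$ around $\mu^*$; the appendix contains no further detail for this theorem, so your reconstruction (linearity of the objective of \eqref{eq:sigbse} in $\mu$ with a $\mu$-independent feasible set, hence Lipschitzness of $\operatorname{OPT}$ in $\ell_1$, the per-round shortfall bound $2\|L\|_\infty\|\mu^*-\mu_t\|_1$, summation of $\sqrt{1/t}$ terms, and an Azuma step for the realized payoffs) is the natural completion of that sketch rather than a different argument. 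Two caveats. First, your decomposition as written skips a term: piece (i) controls $\operatorname{OPT}(\mu^*)-\sum_{\theta}\mu^*(\theta)V(\sigma_t,\theta)$ while piece (ii) controls $V(\sigma_t,\theta_t)-L(i_t,j_t)$, so you still need the intermediate martingale $\sum_{t}\bigl[\sum_{\theta}\mu^*(\theta)V(\sigma_t,\theta)-V(\sigma_t,\theta_t)\bigr]$ coming from the random type draw; it is bounded-difference and adapted (since $\sigma_t$ is $H_{t-1}$-measurable and $\theta_t$ is drawn independently from $\mu^*$), so the same Azuma bound absorbs it without changing the order, but it should be stated. Second, the empirical distribution $\mu_t$ lives on $\Theta$, which has $K$ elements, so the union bound in your concentration step should run over the $K$ types, not ``the $M$ coordinates''; the appearance of $M$ in the deviation level and in the exponents $3\sqrt{M}/56$ and $8M$ comes from the particular inequality of the cited concentration reference (and the Azuma threshold), not from the dimension of the type simplex, so your parenthetical explanation of where $\sqrt{M}$ enters is off even though you correctly flag that the exact exponents must be taken from that reference.
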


\subsection{Online Ignorant Learning}
Note that the success of \eqref{eq:ftl} relies on the obedience \eqref{eq:obedience} and IC constraints \eqref{eq:ic}, making it easier to compute $V$. We now consider the ignorant leader, who does not impose the IC constraints when learning her commitment. As shown in the following, even though the situation becomes more complicated, results similar to \cref{thm:ftl} also hold, meaning that the first-mover advantage still holds. The gist behind this ignorant learning lies in \Cref{lem:reduce1}: without loss of performance, the leader can restrict her action space to a finite-dimensional one (and eventually a finite set). This observation is formally presented \Cref{prop:action-reduce}, and its proof follows from \Cref{lem:reduce1}
\begin{proposition}
\label{prop:action-reduce}
    For $t\in [T]$ and any sequence of the follower's type $\theta_{t\in [T]}$, let $\Pi^*:=\cup_{\gamma\in \Gamma}\operatorname{Vert}(\widehat{\Pi}\cap \mathcal{P}^\gamma)$, 
\begin{align}
    \max_{\bpi\in \Pi}\sum_{t\in [T]}V(\bpi,\psi_{\theta_t}(\bpi))=\max_{\bpi\in \Pi^*}\sum_{t\in [T]}V(\bpi,\psi_{\theta_t}(\bpi))
\end{align}
\end{proposition}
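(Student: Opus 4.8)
The plan is to carry out, in sequence, the two reductions that are already in place and then a standard linear-programming argument. Fix the realized type sequence $\theta_{t\in[T]}$ and let $\mu_T(\theta):=\tfrac1T\sum_{t\in[T]}\mathbb{I}(\theta_t=\theta)$ be its empirical distribution. Because the follower is memoryless, each round's leader payoff under a fixed commitment $\bpi$ is the static quantity $V(\bpi,\psi_{\theta_t}(\bpi))$, which depends on the round only through the realized type, so the cumulative objective collapses to $\sum_{t\in[T]}V(\bpi,\psi_{\theta_t}(\bpi))=T\sum_{\theta\in\Theta}\mu_T(\theta)V(\bpi,\psi_\theta(\bpi))=T\,V^{\mu_T}(\bpi)$; hence it suffices to maximize the single function $V^{\mu_T}$ over $\Pi$. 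First I would invoke \Cref{lem:reduce1}: for every $\bpi\in\Pi$ there is a $\hat\bpi\in\widehat{\Pi}$ with $V(\bpi,\psi_\theta(\bpi))=V(\hat\bpi,\psi_\theta(\hat\bpi))$ for all $\theta\in\Theta$, so $V^{\mu_T}(\bpi)=V^{\mu_T}(\hat\bpi)$; together with $\widehat{\Pi}\subseteq\Pi$ this gives $\sup_{\bpi\in\Pi}V^{\mu_T}(\bpi)=\sup_{\bpi\in\widehat{\Pi}}V^{\mu_T}(\bpi)$, moving the optimization from the infinite-dimensional $\Pi$ into the finite-dimensional polytope $\widehat{\Pi}$.

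Next I would invoke \Cref{lem:linear}: the cells $\{\mathcal{P}^\gamma\}_{\gamma\in\Gamma}$ partition $\widehat{\Pi}$, $\Gamma$ is finite (there are at most $K^K$ surjections $\Theta\to\Theta$), and on each cell $V^{\mu_T}$ agrees with the affine map $\bpi\mapsto\sum_{\theta}\mu_T(\theta)\sum_{b\in\hat{\Xi}}\pi^{\gamma(\theta)}(b)V^\theta(b)$ whose coefficients $V^\theta(b)$ do not depend on $\bpi$. Maximizing an affine function over the convex polytope $\widehat{\Pi}\cap\mathcal{P}^\gamma$ is attained at one of its finitely many vertices, and the union of these vertex sets over $\gamma\in\Gamma$ is exactly $\Pi^*$. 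Since $\Pi^*\subseteq\widehat{\Pi}\subseteq\Pi$, the inequality $\max_{\bpi\in\Pi^*}(\cdot)\le\sup_{\bpi\in\Pi}(\cdot)$ is immediate; for the reverse, any $\bpi\in\widehat{\Pi}$ lies in some cell $\mathcal{P}^\gamma$, and $V^{\mu_T}(\bpi)$ equals the affine value there, which is bounded by the maximum of that affine map over $\operatorname{Vert}(\widehat{\Pi}\cap\mathcal{P}^\gamma)\subseteq\Pi^*$. Multiplying through by $T$ then recovers the stated identity (the left-hand $\max$ being read as the optimal value $\sup_{\bpi\in\Pi}$, and becoming a genuine maximum whenever the optimizing cell vertex lies in its own cell).

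The hard part will be the subtlety flagged after \Cref{lem:linear}: because of the tie-breaking rule in \eqref{eq:new_report}, a cell $\mathcal{P}^\gamma$ need not be closed, so the maximum of its affine piece may be realized only at a vertex $v$ of the closure $\overline{\mathcal{P}^\gamma}$ that actually belongs to a neighboring cell, where $V^{\mu_T}$ records a different, discontinuously smaller, affine form. To make the statement hold literally one must either read $\operatorname{Vert}(\widehat{\Pi}\cap\mathcal{P}^\gamma)$ as the vertex set of $\overline{\mathcal{P}^\gamma}$ equipped with the affine value $\sum_{\theta}\mu_T(\theta)\sum_{b}\pi^{\gamma(\theta)}(b)V^\theta(b)$ inherited from the (open) cell, or restrict to regimes where every relevant cell is closed — for instance under the incentive-compatibility constraints \eqref{eq:ic}, where reporting is truthful and the tie never triggers. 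I would treat this exactly as in the proof of the approximation result in \Cref{sec:probab}: $v$, being an intersection of the hyperplanes cutting out the cell partition together with the facets of $\widehat{\Pi}$, is a vertex of the pertinent closed polytope, hence enters $\Pi^*$ after passing to closures. The only genuinely new ingredient beyond Lemmas~\ref{lem:reduce1} and~\ref{lem:linear} is the bookkeeping observation that averaging over the type sequence collapses the cumulative payoff into the single piecewise-linear function $V^{\mu_T}$ that those lemmas already control.
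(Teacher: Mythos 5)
Your argument is correct and follows essentially the same route the paper intends: its one-line proof (``follows from \Cref{lem:reduce1}'') is precisely your chain of observations that the cumulative payoff collapses to $T\,V^{\mu_T}$ for the empirical distribution $\mu_T$, that \Cref{lem:reduce1} moves the optimization from $\Pi$ into $\widehat{\Pi}$, and that the piecewise linearity of \Cref{lem:linear} together with vertex attainment of affine maxima on each cell $\mathcal{P}^\gamma$ restricts it further to $\Pi^*$. The non-closedness caveat you flag is genuine rather than a flaw in your write-up: the paper states \Cref{prop:action-reduce} with a bare $\max$ and $\operatorname{Vert}(\widehat{\Pi}\cap\mathcal{P}^\gamma)$ without addressing the tie-breaking issue there, dealing with it only in the offline equilibrium computation via the finite set $\Pi^\delta$ and the $\epsilon$-equilibrium notion, so your closure-based reading (or restriction to incentive-compatible regimes) is a reasonable and more careful repair than what the paper itself supplies.
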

Note that $\widehat{\Pi}\cap \mathcal{P}^\gamma$ is a convex polytope in a finite-dimensional space, implying that $\Pi^*$ is a finite set. Hence, the leader action space reduces to a finite set, without incurring any performance loss. 

One key step in designing an online learning algorithm that achieves a diminishing time-average performance gap is to show that the regret of using signaling commitments prescribed by the online learning algorithm grows sublinearly, which is referred to as no-regret learning \citep{perchet14blackwell}. Formally, regret is defined as the difference between the cumulative expected utility of best-in-hindsight signaling commitments and that of the commitments produced by the learning algorithm.
\begin{definition}[Regret]
Given an online learning algorithm $\mathfrak{L}$, the resulting regret in the repeated \texttt{sigBSG} is given by 
    \begin{align*}
    \operatorname{Reg}_{\mathfrak{L}}(T):=\max_{\bpi\in \Pi}\left\{\sum_{t\in [T]}V(\bpi,\psi_{\theta_t}(\bpi))-\E[V(\bpi_t,\psi_{\theta_t}(\bpi_t))]\right\},
\end{align*}
where the expectation is on the randomness of the online algorithm $\mathfrak{L}$. 
\end{definition}

 From the leader's perspective, the problem of learning to signal reduces to online decision-making with a finite action space in an adversarial environment, where the adversarial nature stems from the leader's unawareness of the follower's true type. Therefore, any no-regret learning algorithm, such as follow-the-leader \citep{csaba_bandit} and much broadly speaking, those based on Blackwell approachability \citep{Tao_blackwell,abernethy11approach}, can all be considered as a black-box in designing the learning algorithm. The following theorem is based on the seminal result regarding no-regret learning in the adversarial bandit problem \citep{perchet14blackwell}.  
\begin{theorem}
\label{thm:regret}
    For $t\in [T]$ and every sequence of the follower's type $\theta_{t\in [T]}$, there exists an online learning algorithm $\mathfrak{L}$ such that
    \begin{align}\label{eq:no_regret}
        \operatorname{Reg}_{\mathfrak{L}}(T)\leq O\left(\sqrt{TM\log(K^2N^2+KM)}\right).
    \end{align}
\end{theorem}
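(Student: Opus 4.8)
The plan is to reduce the leader's online signaling problem to a finite-armed adversarial bandit problem and then invoke the known minimax regret bound for such problems \citep{perchet14blackwell}. First I would invoke \Cref{prop:action-reduce}: against any fixed sequence of follower types $\theta_{t\in[T]}$, the best-in-hindsight signaling commitment can be taken in the finite set $\Pi^*=\cup_{\gamma\in\Gamma}\operatorname{Vert}(\widehat{\Pi}\cap\mathcal{P}^\gamma)$. So with no loss in the benchmark term $\max_{\bpi\in\Pi}\sum_{t}V(\bpi,\psi_{\theta_t}(\bpi))$, we may let the leader's action set in every round be the finite set $\Pi^*$, i.e.\ treat each $\bpi\in\Pi^*$ as an arm. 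Because utilities are bounded (Section \ref{sec:model}) and each $V(\bpi,\psi_{\theta_t}(\bpi))$ is determined once $\bpi$ and $\theta_t$ are fixed, at round $t$ playing arm $\bpi$ yields a loss in $[0,1]$ (after normalization) chosen by an adversary who controls $\theta_t$; this is exactly the adversarial multi-armed bandit setting, since after the round the leader observes $\theta_t$ (per the Information Structures discussion) and hence can evaluate the realized reward of the played arm.

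Next I would bound the number of arms $|\Pi^*|$. Each $\mathcal{P}^\gamma$ lives in the finite-dimensional polytope $\widehat{\Pi}\subset[0,1]^{K|\hat\Xi|}$, and is cut out by the consistency constraints together with the linear reporting-optimality inequalities defining each $\mathcal{P}^{\gamma(k)}_k$; the vertices of such a polytope are bounded by a polynomial-in-the-description quantity, and summing over the at most $K^K$ surjections $\gamma$ gives $\log|\Pi^*| = O\!\left(M\log(K^2N^2+KM)\right)$ after collecting the count of defining halfspaces (order $K^2N^2$ reporting comparisons, $KM$ consistency/simplex facets) and using that vertex count is controlled by the ambient dimension $M$ (the effective degrees of freedom, since consistency pins the marginals). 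Feeding $n=|\Pi^*|$ arms into the optimal adversarial-bandit regret bound $O(\sqrt{Tn\log n})$ of \citep{perchet14blackwell} — or more simply $O(\sqrt{Tn})$ with EXP3-type guarantees and $O(\sqrt{T\log n})$ in the full-information case — and substituting $\log n = O(M\log(K^2N^2+KM))$ yields \eqref{eq:no_regret}.

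The main obstacle is the counting step: getting $\log|\Pi^*|$ down to the stated $O(\sqrt{M\log(K^2N^2+KM)})$ inside the square root requires care, because a naive vertex bound for a polytope in $\R^{K|\hat\Xi|}$ with $\sim K^2N^2$ facets is exponential in $K|\hat\Xi|$, not in $M$. The resolution is that the consistency constraints force every $\bpi\in\widehat\Pi$ to share the same marginal $\mathbf{x}\in\Delta_M$, so the relevant objective $\sum_t V(\bpi,\psi_{\theta_t}(\bpi))$, restricted to a region $\mathcal{P}^\gamma$ where the reporting map is fixed, is linear in a representation whose effective dimension is governed by $M$ and the per-type recommended-action structure; one then argues the optimizer over $\widehat\Pi\cap\mathcal{P}^\gamma$ of this linear functional is attained at a vertex that is the image of a vertex of a much lower-dimensional lifted polytope, giving the $\exp(O(M\log(\cdot)))$ bound. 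I would also need to confirm the information-feedback requirement — that the leader can compute the realized reward of the played arm each round — which follows from the reverse-engineering argument in the Information Structures paragraph (the leader recovers $\theta_t$ from $\sigma_t$ and $j_t$), so the bandit reduction is legitimate rather than requiring partial-monitoring machinery.
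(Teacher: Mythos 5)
Your high-level route is the same one the paper takes: use \Cref{prop:action-reduce} (itself a consequence of \Cref{lem:reduce1}) to replace the continuum of signaling commitments by the finite set $\Pi^*=\cup_{\gamma\in\Gamma}\operatorname{Vert}(\widehat{\Pi}\cap\mathcal{P}^\gamma)$, observe that the leader's feedback (she recovers $\theta_t$ after each round) lets her evaluate counterfactual payoffs, and then invoke an adversarial no-regret guarantee as a black box \citep{perchet14blackwell}. The paper itself only sketches this reduction in the text and does not spell out a detailed proof of this theorem in the appendix, so on the qualitative side your proposal is aligned with the intended argument.

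The genuine gap is the quantitative step you yourself flag: the claim $\log|\Pi^*|=O\bigl(M\log(K^2N^2+KM)\bigr)$ is asserted, not proved, and the sketch offered does not close it. The polytopes $\widehat{\Pi}\cap\mathcal{P}^\gamma$ live in $\R^{K|\hat{\Xi}|}$, and $|\hat{\Xi}|$ is itself a union over the $N^K$ tuples $\boldsymbol{j}$ of vertex sets of $M$-dimensional polytopes, so it can be exponential in $K$; a facet-counting vertex bound therefore gives an exponent of order $K|\hat{\Xi}|$, not $M$. The consistency constraints only force all $\pi^\theta$ to share the same marginal $\mathbf{x}\in\Delta_M$; they do not collapse the ambient dimension to $M$, and many distinct vertices of $\widehat{\Pi}\cap\mathcal{P}^\gamma$ share one marginal, so the ``effective dimension governed by $M$'' / ``image of a lower-dimensional lifted polytope'' step is exactly where a proof is still missing. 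Relatedly, your accounting of which black-box bound is used is unstable: with $n=|\Pi^*|$ arms, the bandit bound $O(\sqrt{Tn\log n})$ is exponentially worse than the stated rate, so the reduction must rely either on full-information (expert) feedback with $O(\sqrt{T\log n})$ — legitimate here, since observing $\theta_t$ reveals the payoff of every arm — or on an EXP4-style policy bound of the form $O\bigl(\sqrt{T\,(\#\text{actions})\log(\#\text{experts})}\bigr)$, whose shape ($M$ inside the root multiplying $\log(K^2N^2+KM)$, the latter matching the count of reporting-comparison and consistency/simplex constraints) is what the theorem's bound actually resembles. As written, neither branch of your argument derives the specific factor $M\log(K^2N^2+KM)$; until the cardinality (or the action/expert decomposition) is pinned down rigorously, the bound \eqref{eq:no_regret} does not follow.
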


Finally, with the concentration inequality regarding i.i.d. samples $\mu_t$ , we can show in the following theorem that even though the ignorant leader does not impose IC constraints, she can still learn a signaling commitment that guarantees her average cumulative payoff is as good as her equilibrium payoff in the limit. 
\begin{theorem}\label{thm:adver}
    For $t\in [T]$ and any $\mu^*\in \Delta(\Theta)$, there exists an online learning algorithm satisfying \eqref{eq:no_regret}, for which with probability at least $1-T^{1-\frac{3\sqrt{M}}{56}}-T^{-8M}$, 
    \begin{align*}
        \operatorname{Gap}(\mu^*, T)\leq O(\sqrt{TM\log(K^2N^2+KM)}(1+2\sqrt{\log T})).
    \end{align*}
\end{theorem}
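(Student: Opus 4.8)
The plan is to bound $\operatorname{Gap}_{\mathfrak{L}}(\mu^*,T)=\operatorname{OPT}(\mu^*)\cdot T-V_{\mathfrak{L}}(T)$ by splitting it into three controllable pieces --- a benchmark‑mismatch term, the learner's regret, and a martingale fluctuation --- and then controlling all three on a common high‑probability event. First I would condition on the realized type sequence $\theta_{1:T}$ and write $\mu_T$ for its empirical distribution; this decoupling is the structural crux, because \Cref{thm:regret} is an \emph{adversarial} guarantee that holds for every type sequence, so only the leftover dependence on $\mu_T$ will need the stochastic concentration of types. Since $\sum_{t\in[T]}V(\bpi,\psi_{\theta_t}(\bpi))$ depends on $\theta_{1:T}$ only through the type counts, it equals $T\,V^{\mu_T}(\bpi)$ for every fixed $\bpi$, and combining this with \Cref{prop:action-reduce} (which collapses $\Pi$ to the finite set $\Pi^*$) identifies the best‑in‑hindsight benchmark appearing in the regret with $T\cdot\operatorname{OPT}(\mu_T)$.

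Next I would invoke \Cref{thm:regret} for the learner $\mathfrak{L}$ run on the finite arm set $\Pi^*$ with feedback $L(i_t,j_t)$: by the definition of regret and the previous step, $\operatorname{OPT}(\mu_T)\cdot T-\E[\sum_{t}V(\bpi_t,\psi_{\theta_t}(\bpi_t))]=\operatorname{Reg}_{\mathfrak{L}}(T)\le O(\sqrt{TM\log(K^2N^2+KM)})$, i.e.\ \eqref{eq:no_regret}. Then I would pass from this expected cumulative payoff to the realized $V_{\mathfrak{L}}(T)=\sum_t L(i_t,j_t)$. Letting $\mathcal{F}_t$ be the filtration generated by $H_t$ together with the algorithm's internal randomness, the very definition of $V(\bpi_t,\psi_{\theta_t}(\bpi_t))$ as the expected stage payoff over the induced posterior belief and the leader's sampled action gives $\E[L(i_t,j_t)\mid\mathcal{F}_{t-1},\bpi_t,\theta_t]=V(\bpi_t,\psi_{\theta_t}(\bpi_t))$; hence $L(i_t,j_t)-V(\bpi_t,\psi_{\theta_t}(\bpi_t))$, and analogously $V(\bpi_t,\psi_{\theta_t}(\bpi_t))-\E[V(\bpi_t,\psi_{\theta_t}(\bpi_t))\mid\mathcal{F}_{t-1}]$, are bounded martingale difference sequences (boundedness of $L$), so Azuma--Hoeffding yields $|V_{\mathfrak{L}}(T)-\E[\sum_t V(\bpi_t,\psi_{\theta_t}(\bpi_t))]|\le O(\sqrt{T\log T})$ with high probability.

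It remains to replace $\operatorname{OPT}(\mu_T)$ by the true target $\operatorname{OPT}(\mu^*)$, which I expect to be the main obstacle. For each frozen $\bpi$ the map $\mu\mapsto V^\mu(\bpi)=\sum_\theta\mu(\theta)V(\bpi,\psi_\theta(\bpi))$ is \emph{linear}, so $|V^\mu(\bpi)-V^{\mu'}(\bpi)|\le\|\mu-\mu'\|_1\max_{i,j}|L_{ij}|$, and taking suprema over $\bpi$ shows that $\operatorname{OPT}(\cdot)$ is $\ell_1$‑Lipschitz in the prior --- crucially \emph{without} needing any continuity of $V^\mu$ in $\bpi$, which fails in general. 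Plugging in the same i.i.d.\ concentration inequality used to prove \Cref{thm:ftl}, namely $\|\mu_T-\mu^*\|_1\le O(\sqrt{M/T}\,(1+2\sqrt{\log T}))$ with probability at least $1-T^{1-\frac{3\sqrt M}{56}}-T^{-8M}$, gives $T\,|\operatorname{OPT}(\mu^*)-\operatorname{OPT}(\mu_T)|\le O(\sqrt{MT}\,(1+2\sqrt{\log T}))$.

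Finally I would assemble
\[
\operatorname{Gap}_{\mathfrak{L}}(\mu^*,T)=\big[T\operatorname{OPT}(\mu^*)-T\operatorname{OPT}(\mu_T)\big]+\operatorname{Reg}_{\mathfrak{L}}(T)+\Big[\E\textstyle\sum_t V(\bpi_t,\psi_{\theta_t}(\bpi_t))-V_{\mathfrak{L}}(T)\Big],
\]
and add the three bounds on the joint high‑probability event; the regret term $O(\sqrt{TM\log(K^2N^2+KM)})$ and the statistical/fluctuation terms $O(\sqrt{MT}(1+2\sqrt{\log T}))$ are both dominated by $O(\sqrt{TM\log(K^2N^2+KM)}(1+2\sqrt{\log T}))$, which is the stated estimate. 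The delicate point throughout is exactly the first bracket: no‑regret learning only competes with the best \emph{fixed} signaling against the \emph{realized} type frequencies $\mu_T$, so the whole argument rests on the benchmark reduction of \Cref{prop:action-reduce} and on $\operatorname{OPT}$ inheriting Lipschitz dependence on $\mu$ from the linearity of $V^\mu$ in $\mu$, even though $V^\mu$ is discontinuous in $\bpi$.
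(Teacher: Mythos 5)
Your proof follows essentially the route the paper itself intends (and only gestures at in the sentence preceding the theorem): combine the adversarial no-regret guarantee of \Cref{thm:regret} over the finite action set supplied by \Cref{prop:action-reduce} with the same i.i.d.\ concentration of the empirical type distribution used for \Cref{thm:ftl}, plus a martingale step to pass from expected to realized payoffs. Your three-term decomposition, the identification of the best-in-hindsight benchmark with $T\cdot\operatorname{OPT}(\mu_T)$, and the $\ell_1$-Lipschitzness of $\operatorname{OPT}(\cdot)$ in the prior (via linearity of $V^\mu(\bpi)$ in $\mu$, sidestepping the discontinuity in $\bpi$) supply exactly the details the paper leaves implicit, so the proposal is correct and matches the paper's approach.
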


\section{Conclusion}\label{sec:conclu}
This paper investigates a novel Stackelberg game model, where the leader gains the extra commitment ability: designing a signaling device. In order to coordinate the follower's moves by signaling, the leader needs to signal differently to followers with different types, which requires the follower to report his type to the leader. In the proposed game model \texttt{sigBSG}, both the leader and the follower have asymmetric information, which enables  both sides to strategically reveal information to the other. With the first-mover advantage, the leader can offset the follower's information advantage by optimally designing the signaling device, which brings her higher expected utility compared with the Bayesian Stackelberg equilibrium payoff. Besides the theoretical results, our probabilistic interpretation of the leader's commitment, which is built upon the geometry of the belief space,  greatly simplifies the analysis and provides a geometric intuition of the equilibrium characterization.
\newpage
\begin{appendix}
	\section{Geometric Characterizations of Belief Spaces }\label{apped:belief}
	In our arguments in the main text, we have mentioned that belief spaces $\Xi_j^\theta$ and $\Xi_{\mathbf{j}}$ are convex polytopes, which further leads to the fact that $\hat{\Pi}$ is a convex polytope, and so are its partitions $\mathcal{P}^l_k$. In this appendix, we prove our claims in mathematical terms.
	
	We first note that since the leader's pure strategy set $[M]$ is finite, the induced belief $b$ belongs to the probability simplex $\Delta_M\in \R^M$, a finite-dimensional vector space. Therefore, for every element $b$ in the set $\Xi_j^\theta:=\{b\in \Xi|j\in BR^\theta(b)\}$, it satisfies the following linear inequalities$$\sum_{i\in [M]}b(i)F^\theta(i,j)\geq \sum_{i\in [M]}b(i)F^\theta(i,j'), \quad\text{for all } j'\in [N],$$  which implies that $\Xi_j^\theta$ is a convex polytope. By the definition, $\Xi_{\mathbf{j}}:=\cap_{k\in K}\Xi^{\theta_k}_{j^k}$, $\Xi_{\mathbf{j}}$ is an intersection of $K$ convex polytopes and hence is also a convex polytope.
	
	Since $\Xi_{\mathbf{j}}$  is a convex polytope in a finite-dimensional space, its vertices (extreme points) must be finite; that is, $\hat{\Xi}$ is a finite set. This implies that $\Delta(\hat{\Xi})$ is a probability simplex in a finite-dimensional space, and hence $\hat{\Pi}$ is also of finite dimension, in which every element $\pi$ is a concatenation of $(\pi^\theta)_{\theta\in \Theta}$. To prove that $\hat{\Pi}$ is a convex polytope, we let $W$ be a $M\times|\hat{\Xi}|$ matrix with each column being one element in $\hat{\Xi}$, then the constraint $\sum_{b\in\hat{\Xi}}\pi^\theta(b)b(i)=\sum_{b\in\hat{\Xi}}\pi^{\theta'}(b)b(i)$ can be rewritten as $W\pi^\theta-W\pi^{\theta'}=0$.  Finally, suitably defining $\Phi$ a block matrix of dimension $KM\times|\hat{\Xi}|$, then the constraints in the definition of $\widehat{\Xi}$ is $\Phi\boldsymbol{\pi}=0$, which can be safely rewritten as $\Phi\boldsymbol{\pi}\geq 0$, as $\boldsymbol{\pi}$ lies within the product of simplex spaces. Therefore, $\hat{\Xi}$ is a convex polytope in a finite-dimensional space.
	
	What remains is to show that $\mathcal{P}_k^l:=\{\bpi\in \widehat{\Pi}| \psi_{\theta_k}(\bpi)=\theta_l\}$ is a convex polytope. According to the tie-breaking rule, the reported type should be the $\theta_l$ with the smallest index that satisfies \eqref{eq:new_report}. In other words, for any $\bpi\in \mathcal{P}_k^l$, it satisfies the following linear inequalities
	\begin{enumerate}[1)]
		\item if $1\leq k<l\leq K$:
	\begin{align*}
		&\sum_{b\in \hat{\Xi}}\pi^{\theta_l}(b)F^{\theta_k}(b)>\sum_{b\in \hat{\Xi}}\pi^{\theta_q}(b)F^{\theta_k}(b), \quad\text{for all } 1\leq q\leq l-1, \\
		&\sum_{b\in \hat{\Xi}}\pi^{\theta_l}(b)F^{\theta_k}(b)\geq\sum_{b\in \hat{\Xi}}\pi^{\theta_q}(b)F^{\theta_k}(b), \quad\text{for all } l+1\leq q\leq K ;
	\end{align*}
	\item if $1\leq l<k\leq K$:
	\begin{align*}
		&\sum_{b\in \hat{\Xi}}\pi^{\theta_l}(b)F^{\theta_k}(b)>\sum_{b\in \hat{\Xi}}\pi^{\theta_q}(b)F^{\theta_k}(b), \quad\text{for all } 1\leq q\leq l-1, \\
		&\sum_{b\in \hat{\Xi}}\pi^{\theta_l}(b)F^{\theta_k}(b)>\sum_{b\in \hat{\Xi}}\pi^{\theta_k}(b)F^{\theta_k}(b),\\
		&\sum_{b\in \hat{\Xi}}\pi^{\theta_l}(b)F^{\theta_k}(b)\geq\sum_{b\in \hat{\Xi}}\pi^{\theta_q}(b)F^{\theta_k}(b), \quad\text{for all } q\neq k, l+1\leq q\leq K ;
	\end{align*}
	\item if $1\leq l=k\leq K$:
	\begin{align*}
		&\sum_{b\in \hat{\Xi}}\pi^{\theta_l}(b)F^{\theta_k}(b)=\sum_{b\in \hat{\Xi}}\pi^{\theta_k}(b)F^{\theta_k}(b),\\
		&\sum_{b\in \hat{\Xi}}\pi^{\theta_l}(b)F^{\theta_k}(b)\geq\sum_{b\in \hat{\Xi}}\pi^{\theta_q}(b)F^{\theta_k}(b), \quad\text{for all } 1\leq q\leq K .
	\end{align*}
	\end{enumerate}
 	Therefore, we conclude that $\mathcal{P}_k^l$ is also a convex polytope but not necessarily a closed one, and further, $\mathcal{P}^\gamma:=\cap_{k\in [K]}\mathcal{P}^{\gamma(k)}_k$ , as an intersection of finite convex polytopes, is also a convex polytope,

	\section{Full Proofs}
	In this appendix, we provide the reader with detailed proofs to our results in the main text.
	\begin{proof}[Proof of \Cref{lem:reduce1}]
		We begin the proof with the following fact: since $\hat{\Xi}$ is a convex polytope, any posterior belief $b$ must belong to $\Xi_{\boldsymbol{j}}$ for some $\boldsymbol{j}\in[N]^K$, and hence, it can be represented as the convex combination of elements of $\hat{\Xi}$. We denote such convex combination by $\omega^b\in \Delta(\hat{\Xi})$, where for $b'\in \hat{\Xi}$, $\omega^b(b')$ is coefficient corresponding to $b'$. Given $\bpi$, we define a new signaling $\hat{\bpi}$ as 
\begin{align*}
	\hat{\pi}^{\theta}(b):=\sum_{\tilde{b}\in \supp(\pi^\theta)}\pi^\theta(\tilde{b})\omega^{\tilde{b}}(b).
\end{align*}
Since $\bpi$ is consistent, $\hat{\bpi}$ is also consistent, as shown below
\begin{align*}
	\sum_{b\in \hat{\Xi}}\hat{\pi}^{\theta}(b)b(i)&=\sum_{b\in \hat{\Xi}}\sum_{\tilde{b}\in \supp(\pi^\theta)}\pi^\theta(\tilde{b})\omega^{\tilde{b}}(b)b(i)\\
	&=\sum_{\tilde{b}\in \supp(\pi^\theta)}\pi^\theta(\tilde{b})\tilde{b}(i)\\
	&=\sum_{\tilde{b}\in \supp(\pi^{\theta'})}\pi^{\theta'}(\tilde{b})\tilde{b}(i)\\
	&=\sum_{b\in \hat{\Xi}}\hat{\pi}^{\theta'}(b)b(i),
\end{align*}
which implies that $\bpi^*$ is also a valid signaling commitment. 

Given a belief $b'$, let $\boldsymbol{j}$ be the tuple specifying the best response under this belief. At each $b\in \operatorname{Vert}(\Xi_{\boldsymbol{j}})$, the follower of type $\theta$ plays $BR^\theta(b)$, and hence, the following holds
\begin{align*}
	\sum_{i\in [M]}b(i)L(i,BR^\theta(b))=\sum_{i\in [M]}b(i)L(i,BR^\theta(b')),
\end{align*}  
which further leads to 
\begin{align*}
	\sum_{b\in \operatorname{Vert}(\Xi_{\boldsymbol{j}})}\omega^{b'}(b)V^\theta(b)&=\sum_{b\in \operatorname{Vert}(\Xi_{\boldsymbol{j}})}\omega^{b'}(b)\sum_{i\in[M]}b(i)L(i,BR^\theta(b))\\
	&=\sum_{b\in \operatorname{Vert}(\Xi_{\boldsymbol{j}})}\omega^{b'}(b)\sum_{i\in [M]}b(i)L(i,BR^\theta(b'))\\
	&=\sum_{i\in [M]}b'(i)L(i,BR^\theta(b'))\\
	&=V^\theta(b').
\end{align*}
Therefore, for the leader's expected utility $V(\bpi, (\hat{\theta};\theta))$, we obtain 
\begin{align*}
	V(\bpi, (\hat{\theta};\theta))&=\sum_{b'\in \supp(\pi^{\hat{\theta}})}\pi^{\hat{\theta}}(b')V^{\theta}(b')\\
	&=\sum_{b'\in \supp(\pi^{\hat{\theta}})}\pi^{\hat{\theta}}(b')\sum_{b\in \operatorname{Vert}(\Xi_{\boldsymbol{j}})}\omega^{b'}(b)V^\theta(b)\\
	&=\sum_{b\in \hat{\Xi}}\hat{\pi}^{\hat{\theta}}(b)V^\theta(b)\\
	&=V(\hat{\bpi}, (\hat{\theta};\theta)),
\end{align*}
which proves \eqref{eq:V_eq} and a similar argument proves \eqref{eq:U_eq}.
	\end{proof}
\end{appendix}

\bibliographystyle{plainnat}
\bibliography{sigBSG_full}
\end{document}